\documentclass[conference]{IEEEtran}

\usepackage{cite}

\usepackage{soul}

\usepackage[T1]{fontenc}
\usepackage{graphicx}
\usepackage[english]{babel}
\usepackage{amsthm}
\usepackage{amsmath,amssymb}
\usepackage{ stmaryrd }
\usepackage{booktabs}
\usepackage{xfrac}
\usepackage{xcolor}
\usepackage{mathtools,amsfonts,mathpartir}
\usepackage{pifont}
\usepackage{array}
\usepackage{algorithm}
\usepackage[noend]{algpseudocode}
\usepackage{listings}
\usepackage[inline]{enumitem}
\usepackage{xspace}
\usepackage{semantic}
\usepackage{mathrsfs}
\usepackage{verbatim}
\usepackage[normalem]{ulem}
\usepackage{cleveref}
\usepackage{multirow} 
\usepackage{thm-restate}
\usepackage{prftree}

\include{IoTmacros}

\newcommand{\rulename}[1]{\mbox{\footnotesize\scshape(#1)}}
\usepackage[normalem]{ulem}

\usepackage{tikz}
\usetikzlibrary{positioning,fit,calc,backgrounds}
\usetikzlibrary{arrows.meta,calc,positioning,decorations.pathmorphing,fit,shapes,patterns,patterns.meta}
\usetikzlibrary{decorations.pathmorphing,positioning,calc}

 \newtheorem{lemma}{Lemma}
 
 \newtheorem{example}{Example}
 \newtheorem{definition}{Definition}
 \newtheorem{notation}{Notation}

\algnewcommand\AAnd{\textbf{and} }
\newcommand{\thetool}{\textsc{IoT:Poker}}

\lstdefinelanguage{term}{
    keywords = {\$, ?, -, iotpoker, reachOnly, sat, unsat}
}

\lstdefinestyle{term}{
	basicstyle=\footnotesize\ttfamily, 
	language=term,
	showstringspaces=false,
	tabsize=1,
	breaklines=true,
	breakatwhitespace=false,
}

\newcommand{\cmark}{\ding{51}}
\newcommand{\xmark}{\ding{55}}

\def\BibTeX{{\rm B\kern-.05em{\sc i\kern-.025em b}\kern-.08em
    T\kern-.1667em\lower.7ex\hbox{E}\kern-.125emX}}

\allowdisplaybreaks

\AddToHook{cmd/appendices/before}{%
	\crefalias{section}{appendix}%
	\crefalias{subsection}{appendix}
}

\newcommand{\permit}{\mathrel{\mid\mkern-3mu\sim}}

\begin{document}

\title{Checking Information Flow in Cloud-based IoT Access Control Policies\\
(Extended Version)
}

\author{
\IEEEauthorblockN{
Lorenzo Ceragioli,
Letterio Galletta,
Edoardo Lunati
}
\IEEEauthorblockA{
IMT School for Advanced Studies Lucca\\
Lucca, Italy\\
Emails: lorenzo.ceragioli@imtlucca.it, letterio.galletta@imtlucca.it, edoardo.lunati@imtlucca.it
}
}

\maketitle              %

\begin{abstract}
  Many cloud providers for IoT technologies offer 
  access control mechanisms whose proper configuration is 
  critical
  for security.
  However, verifying permissions in isolation is insufficient
  in a setting where devices have different levels of trust or are compartmentalised in various subsystems.
  This work analyses IoT access control policies to identify potential security vulnerabilities from unwanted information flow between devices. 
  To this end, we formally model  AWS IoT Core's components and define an information flow graph to capture the communication among devices permitted by the access control policies.
  We build a finite representation of the graph by leveraging an SMT solver,
  thus enabling the verification of information flow between devices.   
  We implement our approach in a tool called \thetool, 
  and assess it on a realistic scenario and several real-world policies.
\end{abstract}

\section{Introduction}\label{sec:introduction}

\IEEEPARstart{C}{loud} computing has emerged as a cornerstone for Internet of Things (IoT) deployments, offering tailored infrastructure and platform services: 
numerous cloud providers offer specialised Platform-as-a-Service and Infrastructure-as-a-Service solutions 
for IoT applications, e.g., AWS IoT Core~\cite{IoTCore} and Azure IoT Hub~\cite{IoTAzure}.  
These services allow 
developers to offload 
security and deployment responsibilities
onto cloud providers.

Regarding security, many providers supply developers with languages to define their access control policies (\emph{IoT policies}).
An IoT policy is a specification of what resources an IoT device can access, which actions it can perform, and under which conditions (e.g., publish an MQTT~\cite{MQTTSpec} message -- a type of action from
a popular IoT messaging protocol, see Section~\ref{sec:background}).
Properly configured policies are essential for the security of IoT systems.
Indeed, previous research has highlighted the susceptibility of IoT policies to misconfigurations and the significant risks they may cause~\cite{p-verifier}. 
However, preventing unauthorised access
is necessary but not sufficient to ensure security in a setting where devices have a different degree of confidentiality, trust 
and of robustness against attacks. 
Therefore, developers resort to compartmentalisation and information flow control to increase the security level of their systems. 
To protect integrity, they assign integrity levels to devices and prevent any direct or transitive dependency that would let lower-integrity devices influence higher-integrity ones ("no read down, no write up").
Similarly, to enforce confidentiality, they assign confidentiality levels to devices and disallow interactions that would allow devices to learn from higher levels ("no read up") or leak to lower levels ("no write down").
Ensuring both integrity and confidentiality requires isolating devices and removing unwanted dependencies.
We say that there is an information flow between devices $d_1$ and $d_n$ if some devices $d_2, d_3 \dots, d_{n-1}$ exist such 
that the device $d_i$ sends a message that is received by $d_{i+1}$~\cite{Gran,marc2,Guttman,Radhika,IFCIL}.

This paper addresses the problem of checking information flow in IoT access policies, namely, 
protecting critical devices from untrusted devices and thereby enforcing security properties, including confidentiality and integrity.
We provide a formal model that characterises how access to resources in an IoT-Cloud deployment is regulated via policies.
Then, we introduce a novel verification procedure to verify that a policy enjoys such information-flow properties and implement it in a tool called \thetool.
Formal verification techniques have been proven effective in detecting misconfigurations and verifying security properties in access control policies at various levels.
Indeed, prior work~\cite{p-verifier,Zelkova,BultanPerm,BultanQuacky,BultanRepair} studied the problem of permission misconfiguration in both cloud-based and IoT access policies, focusing on AWS Identity and Access Management (IAM) and AWS IoT Core. 
However, such works analyse policies in isolation to check the granted permissions and do not consider possible interactions among devices, so the possible information flow
(see \Cref{sec:ralated}). 
Here, we fill this gap by providing a triple contribution:
\begin{enumerate}[label=(\roman*)]
\item We introduce a formal model of IoT policies, specifically targeting AWS IAM policies and their evaluation within the AWS IoT Core infrastructure.
 	\item  Given a set of IoT policies, we analyse them and build an \emph{information flow graph} to capture the possible communication interaction between devices allowed by the policy. 
 	The information flow graph captures the counterintuitive behaviour arising from the interplay of wildcard characters and variables, which are resolved at different times during request evaluation and may lead to \emph{wildcard injection attacks}.
 	Checking an information flow thus reduces to reachability in this graph.
    We compute a finite symbolic version of the graph through an SMT solver to enable practical verification, leveraging the specific constraints of AWS policies
		to efficiently decide the satisfiability of formulas with regular languages and string manipulation.
	\item Finally, we implement our verification mechanism in the tool \thetool, available online~\cite{repo}.
   We evaluate the tool's effectiveness in detecting misconfigurations and unintended information flows using a case study designed to capture a typical Build Automation System. Although synthetic, the case study reflects the main structural constraints and operational patterns of real IoT systems.  
   In addition, we assess the tool's performance by randomly synthesising a network of devices and associating them with real-world IoT policies available online~\cite{p-verifier-website}: we show that \thetool\ scales well in practice as the network grows.  
\end{enumerate}

\subsubsection*{Plan of the paper}
In \Cref{sec:background}, we introduce the MQTT protocol and AWS IoT Core.
\Cref{sec:case:study} present our running example.
\Cref{sec:model} formalises AWS IoT Core's main components.
In \Cref{sec:infoflow}, we describe how to build the information flow graph from a set of policies.
\Cref{sec:tool} presents our tool and its experimental evaluation, 
\Cref{sec:ext:model} discusses the assumptions and limitations of our model, and how to extend our proposal to additional mechanisms available in IAM policies.
In \Cref{sec:ralated,sec:concl}, we compare our approach with the literature and draw some conclusions.
Appendices contain a summary of symbols and notation (\Cref{tab:notation}), a formal model of device-broker interaction, the proofs of our formal development, and some complementary evaluation results.

\section{Background}\label{sec:background}

Here, we describe the MQTT protocol and the AWS IoT policy language by focusing
only on the main components, which are sufficient for our formal development.
Possible extensions and advanced features are discussed in~\Cref{sec:ext:model}.
We also partially adapt the syntax for brevity, e.g. by omitting the Amazon Resource Name (ARN) from IoT-AWS policies.

\subsubsection*{MQTT Protocol}\label{sec:MQTT}
Message Queuing Telemetry Transport (MQTT)~\cite{MQTTSpec} is a lightweight client/server protocol relying on a publish/subscribe communication model, designed for environments where resources and bandwidth are constrained and limited, such as 
machine-to-machine
and IoT contexts.
In the publish/subscribe communication model, the entities sending messages (\emph{the publishers}) and the ones
receiving them (\emph{the subscribers}) indirectly interact through the infrastructure 
provided by a third component (\emph{the broker}), which is the only one responsible for delivering messages.
In more detail, 
clients send $\code{CONNECT}$ requests to establish a connection with the broker, specifying their chosen client id.
After connection, clients communicate via \emph{topics:}
virtual communication channels managed by the broker.
In practice, 
topics are
hierarchically-structured strings,
organized into \emph{topic levels} through forward-slash characters `$\code{/}$'.
For example, the topic $\code{floor1/room1/temp}$ can be used to communicate 
the sampled temperature in the given room and floor of a building. 
The broker performs pattern-matching to filter and forward incoming messages to interested clients.
A client can subscribe to a given set of topics 
by sending a $\code{SUBSCRIBE}$ request to the broker,
specifying a \emph{topic filter}, namely a restricted regular expression denoting the set of topics it is interested in.
Topic filters can use the single-level `+' or the multi-level `$\code{\#}$' wildcard characters:
`+' replaces any single topic level, whereas `$\#$' at the end of a topic filter represents any sequence of topic levels.
For example, the topic filter $\code{floor1/{+}/temp}$ allows a client to receive temperature values from any room on the first floor of a building.
For publishing on a given topic, 
clients send a $\code{PUBLISH}$ message
specifying the topic and a payload; the broker then propagates the payload to all clients subscribed to that topic.
For example, a temperature sensor in $\code{room1}$ can publish on the topic $\code{floor1/room1/temp}$ to inform all the clients interested in that temperature value.

\subsubsection*{AWS IoT Policies}
Amazon Web Services (AWS)~\cite{AWS} provides cloud services
and a communication infrastructure on which IoT developers can deploy their devices, known as AWS IoT Core~\cite{IoTCore}.
Among the various technologies provided by AWS IoT Core, we find the MQTT protocol: the cloud acts as 
the MQTT broker, while providing security mechanisms via the implementation of authentication and access control.
Devices may connect to the broker only if they successfully authenticate
through a certificate or via other Amazon authentication services~\cite{IAM,Cognito};
moreover, even when a client is authenticated, access to the MQTT infrastructure is controlled 
via a collection of access control policies, which 
restricts the available topics that any given client may use.

IoT Core allows the definition of different access rights,
which correspond to MQTT actions.
Here, we only consider the 
main ones:
\begin{enumerate*}[label=(\roman*)]
  \item $\code{iot:Connect}$ to connect to the cloud infrastructure;
  \item $\code{iot:Publish}$ to publish on a certain topic;
  \item $\code{iot:Subscribe}$ to subscribe to some topic filter;
  \item $\code{iot:Receive}$ to receive the messages
  published on the specified topics.
\end{enumerate*}
Note that clients need both the $\code{iot:Receive}$ and $\code{iot:Subscribe}$ access rights to receive messages on a given topic.
The access rights are specified in a policy (a JSON 
document) consisting of a set of \emph{policy statements} (access control rules). 
Each statement describes whether a specific action, e.g.~$\code{iot:Publish}$, is permitted or not ($\allow$ or $\deny$) on a given resource (topic, topic filter, client id).
Policy developers can use wildcard characters `$?$' and `$\ast$' to represent any single character and any sequence of characters, respectively.
For example, the following policy:
\begin{lstlisting}[basicstyle=\ttfamily\footnotesize]
{ "Effect": "Allow", 
  "Action": "iot:Subscribe",
  "Resource": [ "floor1/*/temp", "floor1/room1/*" ]},
{ "Effect": "Allow", 
  "Action": "iot:Receive",
  "Resource": [ "floor1/*/temp", "floor1/room1/*" ]}
\end{lstlisting}
grants a client permission to subscribe and receive messages on topics concerning the temperature of any room on the first floor and any sensor in $\code{room1}$ of the first floor.
Moreover, policy statements may use variables, such as $\code{\$\{iot:ClientId\}}$, 
which are substituted for attributes associated with the certificate provided during connection.
Permissions are then evaluated on the resulting variable-free policy by following a \emph{default-deny} strategy: access is granted if there is at least a statement
allowing the requested action and no statement denying it; in any other case, access is rejected.

\begin{figure*}[t]
\centering
\scalebox{1.1}[1.1]
{
\begin{tikzpicture}[
	node distance=3mm,
	arrow/.style={-stealth,thick},
	obj/.style={font=\footnotesize},
	obj/.style={font=\footnotesize},
	weight/.style={font=\footnotesize,midway},
	title/.style={font=\fontsize{7}{7}\color{black!80}\ttfamily},
    typetag/.style={rectangle, draw=black!50, font=\footnotesize\ttfamily, anchor=west},
    topic/.style={rectangle, thick, draw=black, font=\footnotesize\ttfamily}
]
\node [title] (phy) at (0,0) {physical AC system for floor 1/2};
\node [obj,  above left=1mm and -12mm of phy] (bulb) {
	\includegraphics[width=5mm,keepaspectratio]{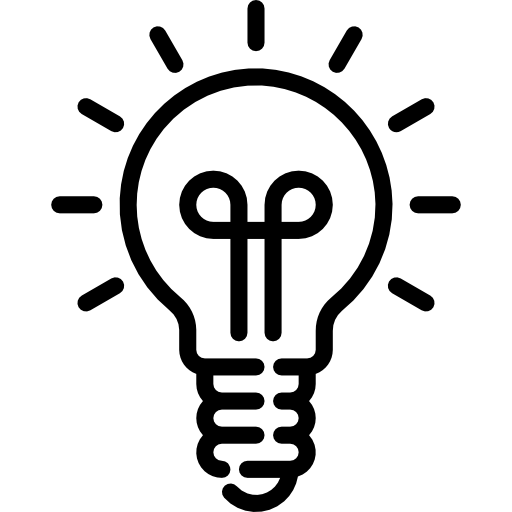}
};
\node [obj, right= of bulb] (psens) {
	\includegraphics[width=5mm,keepaspectratio]{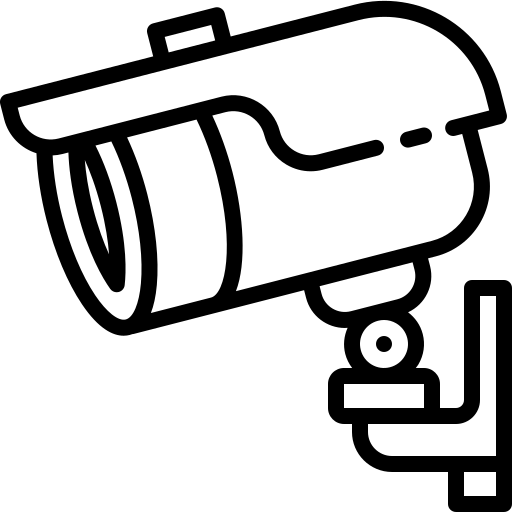}
};
\node [obj, right= of psens] (lock) {
	\includegraphics[width=5mm,keepaspectratio]{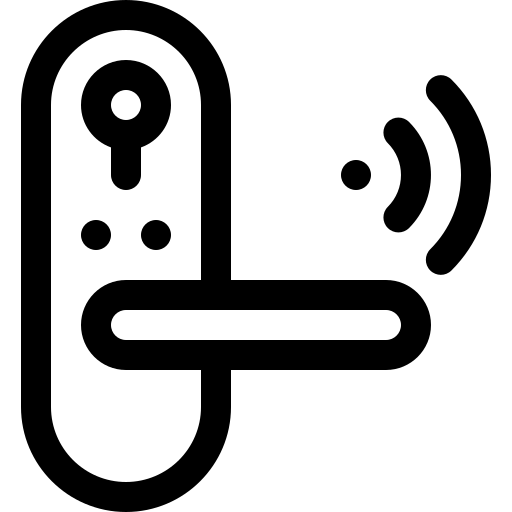}
};
\node [obj, right= of lock] (bread) {
	\includegraphics[width=5mm,keepaspectratio]{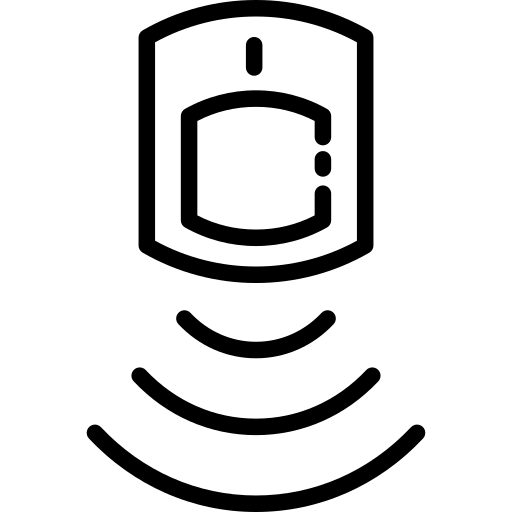}
};

\begin{scope}[on background layer]
\node [fill=black!10, inner ysep=.5mm, inner xsep=1.5mm, fit={(phy) (bread) (psens) (lock) (bulb)}] {};
\end{scope}

\node [obj, right=17.5mm of bread] (pump) {
	\includegraphics[width=5mm,keepaspectratio]{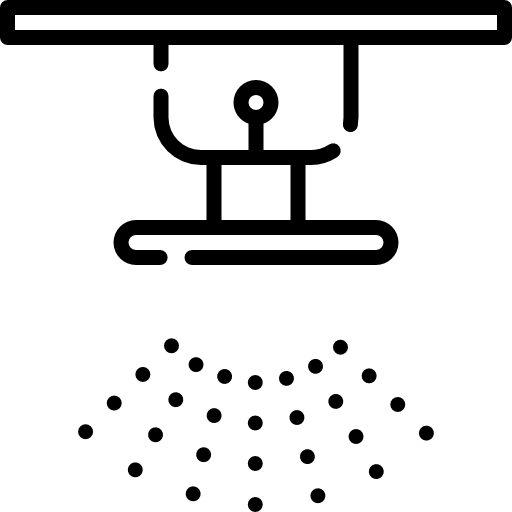}
};
\node [obj, right= of pump] (aalarm) {
	\includegraphics[width=5mm,keepaspectratio]{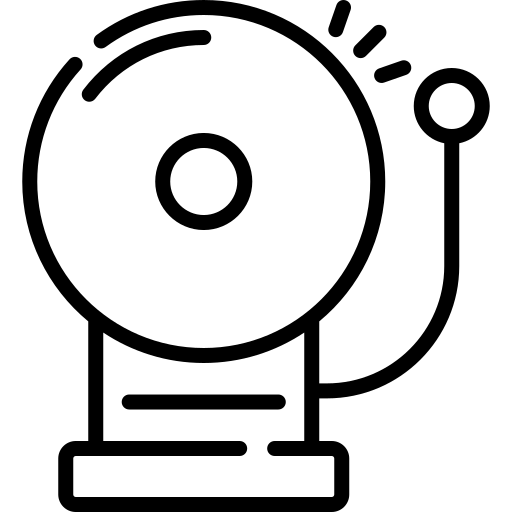}
};
\node [obj, right= of aalarm] (abutton) {
	\includegraphics[width=5mm,keepaspectratio]{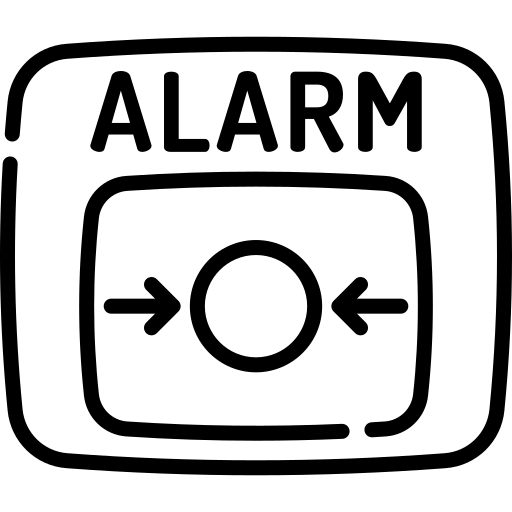}
};
\node [obj, right= of abutton] (ssens) {
	\includegraphics[width=5mm,keepaspectratio]{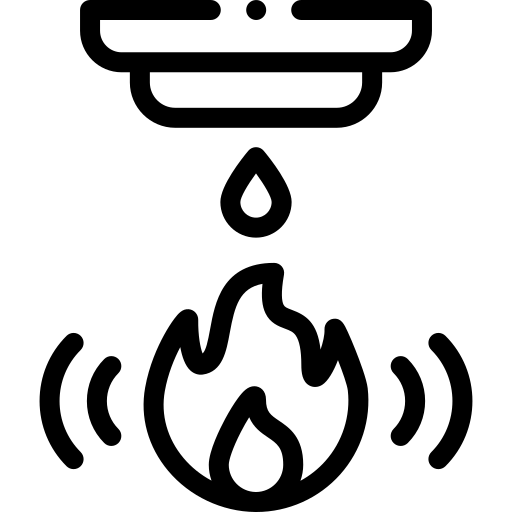}
};
\node [title, below right=1mm and -11mm of pump] (fire) {fire alarm system for floor 1/2};
\begin{scope}[on background layer]
\node [fill=black!10,inner ysep=.5mm, inner xsep=1.5mm,fit={(fire) (ssens) (pump) (aalarm) (abutton)}] {};
\end{scope}

\node [obj, right=14mm of ssens] (elevator) {
	\includegraphics[width=5mm,keepaspectratio]{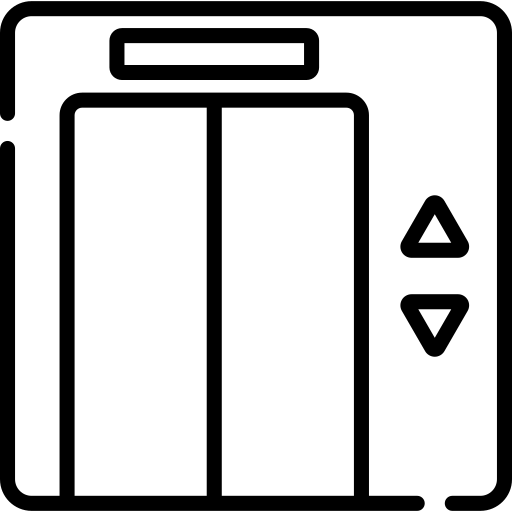}
};
\node [title, below right=1mm and -9.5mm of elevator] (other) {\phantom{y}other\phantom{/}};
\begin{scope}[on background layer]
\node [fill=black!10,inner ysep=.5mm, inner xsep=0mm,fit={(other) (elevator)}] {};
\end{scope}

\node [fit={(elevator) (phy)}] (fl) {};

\node [obj, left=5mm of fl] (fltytle) {
	\footnotesize
	\begin{tabular}{c}
	Field\\[-1mm]
	Layer
	\end{tabular}
};
\node [obj, above=5mm of fltytle] (tltytle) {\footnotesize Topics};
\node [obj, above=1.5mm of tltytle] (cltytle) {
	\begin{tabular}{c}
	\footnotesize
	Control\\[-1mm]
	Layer
	\end{tabular}
};

\node [topic, above=5mm of $(psens.north)!0.5!(bulb.north)$] (t1) {};
\node [topic, above=5mm of $(psens.north)!0.5!(lock.north)$] (t2) {};
\node [topic, above=5mm of $(lock.north)!0.25!(bread.north)$] (t3) {};
\node [topic, above=5mm of $(lock.north)!0.75!(bread.north)$] (t4) {};
\node [topic, above=5mm of $(pump.north)!0.5!(aalarm.north)$] (t5) {};
\node [topic, above=5mm of $(ssens.north)!0.5!(abutton.north)$] (t6) {};

\node [obj, inner ysep=0mm, inner xsep=0mm, above=4mm of $(t1.north)!0.5!(t2.north)$] (serv1) {
	\includegraphics[width=7mm,keepaspectratio]{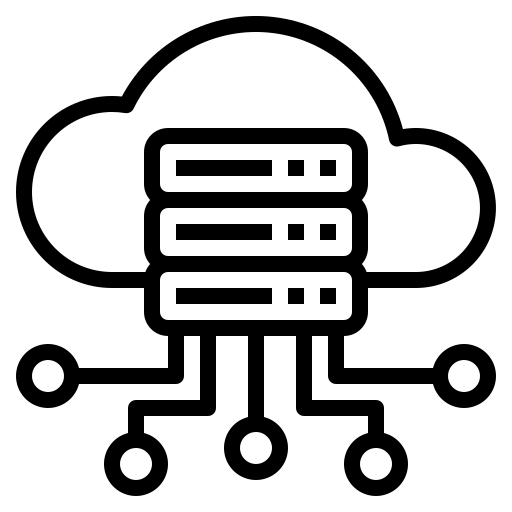}
};
\node [obj, inner ysep=.5mm, inner xsep=1mm, fill=white, fill opacity=0.85,text opacity=1, text=black] at (serv1) {
	$\codea{log}$
};

\node [obj, inner ysep=0mm, inner xsep=0mm, above=4mm of $(t3.north)!0.5!(t4.north)$] (serv2) {
	\includegraphics[width=7mm,keepaspectratio]{img/cloud}	};
\node [obj, inner ysep=.5mm, inner xsep=1mm, fill=white, fill opacity=0.85,text opacity=1, text=black] at (serv2) {
	 $\codea{AC list}$
};

\node [obj, inner ysep=0mm, inner xsep=0mm, above=4mm of $(t5.north)!0.5!(t6.north)$] (serv3) {
	\includegraphics[width=7mm,keepaspectratio]{img/cloud}
};
\node [obj, inner ysep=.5mm, inner xsep=1mm, fill=white, fill opacity=0.85,text opacity=1, text=black] at (serv3) {
	$\codea{fire manager}$
};

\draw [arrow] (t1) -- (bulb);
\draw [arrow] (psens) -- (t1);
\draw [arrow] (t2) -- (psens);
\draw [arrow] (lock) -- (t2);
\draw [arrow] (t3) -- (lock.north);
\draw [arrow] (bread) -- (t4);
\draw [arrow] (t5) -- (lock.north);
\draw [arrow] (t5) -- (pump.north);
\draw [arrow] (t5) -- (aalarm.north);
\draw [arrow] (t5) -- (elevator.north);
\draw [arrow] (abutton.north) -- (t5);
\draw [arrow] (ssens.north) -- (t6);

\draw [arrow] (t1) -- (serv1);
\draw [arrow] (t2) -- (serv1);
\draw [arrow] (serv2) -- (t3);
\draw [arrow] (t4) -- (serv2);
\draw [arrow] (serv3) -- (t5);
\draw [arrow] (t6) -- (serv3);

\end{tikzpicture}}
\caption{The Architecture of Our Building Automation System.}
\label{fig:bas}
\end{figure*}

\section{A running example: a Smart Building}
\label{sec:case:study}

Assume a smart building with two floors, where several subsystems, including energy consumption, physical access control, heating, and fire alert, are all integrated.
Our Building Automation System (BAS) is cloud-based, and it is organised in two layers (see \Cref{fig:bas}):
the field layer contains off-the-shelf IoT devices that interact with the physical world via sensors and actuators, while the control layer consists of an MQTT broker and several cloud services that implement the communication infrastructure and the control logic.
These services execute appropriate actions when triggered by events occurring in the field layer and provide operators with dashboards to monitor, configure, and control the system.
Moreover, we group IoT devices in subsystems according to their functionalities.
For example, smoke detectors are part of the fire alarm system, whereas badge readers are part of the physical access control system. 

The physical access control system includes badge readers, smart locks, presence sensors, and light bulbs.
Each floor of the building is expected to behave independently.
Consider the first floor as an example.
When the badge reader scans a bar code, it publishes a message on the topic $\code{phAC/floor1/bdgReader1/check}$.
The cloud service $\code{AClist}$ waits for messages on this topic and checks the code's validity in the message's payload.
If the badge is authorised to open the door, $\code{AClist}$ publishes a message on the topic $\code{phAC/floor1/lock1/open}$. 
When the smart lock $\code{lock1}$ receives such a message, it unlocks itself and activates the presence sensor of the room by publishing on the topic $\code{phAC/floor1/prsSens1/enable}$.
When the presence sensor $\code{prsSens1}$ is active and detects someone inside the room, it publishes on the topic $\code{phAC/floor1/dtdMovement/light1}$. 
The light bulb $\code{light1}$ is subscribed to this topic and switches itself on 
upon reception.
The service $\code{log}$ subscribes to all the topics mentioned above to keep a trace of 
such events.

The fire alarm subsystem consists of smoke sensors, alarm buttons, acoustic alarms, and water pumps.
Its expected behaviour for the first floor follows.
When a smoke sensor detects some smoke in its area, it publishes a message on the topic $\code{fire/floor1/smokeLvl}$ that is received by the cloud service $\code{fireMngr}$ subscribed to the related topic filter.
If the level of detected smoke exceeds a given threshold, $\code{fireMngr}$ publishes on the topic $\code{fire/detected}$ to raise the fire alarm. 
The water pumps, acoustic alarms, doors, and elevators all subscribe to this topic.
When they receive a message on it, the water pumps and acoustic alarms activate, the doors unlock, and the elevator travels to the ground floor, opens its doors, and halts.
The alarm can also be raised by the fire alarm button, which publishes to the same topic.%

The IoT policies for each device of the BAS system are available online~\cite{repo}.
We assume that the cloud infrastructure is trusted, while devices may be compromised and policies misconfigured.
An attacker can gain full control of vulnerable devices (e.g., a light bulb) and use them to compromise other critical devices and obtain private information.
The following sections present our formal model, our verification framework and the tool \thetool, which detects policy misconfiguration and unintended information flows.

\section{A Formal Model of IoT Access Policies}\label{sec:model}

We take AWS IoT Core as a reference implementation of an IoT-Cloud infrastructure, 
and we formalise a core of the AWS policy language and its semantics, focusing on the interplay between MQTT and the authorisation mechanism.
In doing so, we clarify the counterintuitive semantics of AWS and MQTT wildcards and variable substitution in policy evaluation.

Our presentation follows a bottom-up approach.
\begin{definition}[Resource]
A resource $\rho$ is either a client id, a topic, or a topic filter.
Let $\Lambda$ be the alphabet of alphanumerical characters enriched with the forward-slash character `$/$'.
Topics are strings over $\Lambda$, while client ids and topic filters are strings over 
$\Lambda \cup \{ \code{+}, \code{\#} \}$.
\end{definition}

We denote with $\cids$, $\topics$, and $\topicfs$ the set of client ids, topics, and topic filters, respectively.
Note that these sets have a non-empty intersection, so as to model
features of the language like substitutions of client ids for $\cidVAR$ variable and matching topics with topic filters.
In the following, we see that MQTT wildcards are treated as common characters in client ids, and as special characters when occurring in topic filters, which allows \emph{wildcard injection}.
Indeed, although client ids and topic filters have the same syntax, a topic filter $\tf$ can be interpreted 
as a regular expression $\tf^{\mqtt}$ where the character `$\code{\#}$' is interpreted as any sequence of characters in $\Lambda$, while the character `$\code{+}$' as any sequence of characters different from `$\code{/}$' in $\Lambda$.
According to MQTT rules, the substitution takes place only if `$\code{\#}$' is the terminal character and `$\code{\#}$' and `$\code{+}$' appear as topic levels, i.e. they are separated from other characters by `$\code{/}$'.
The language of $\tf^{\mqtt}$, denoted as $\lang{\tf^{\mqtt}}$, contains the topics that are matched by the topic filter $\tf$.
\begin{example}
The topic filter $\code{floor1/{+}/temp}$ matches the topic $\code{floor1/room10/temp}$ because it is recognized by the regular expression 
$\code{floor1/[a{-}zA{-}Z0{-}9]\Ast/temp}$, where
$\code{[a{-}zA{-}Z0{-}9]}$ is any alphanumerical character.
\end{example}

In a policy, resources are specified by strings called resource expressions, which
may contain 
AWS wildcards to compactly represent a set of resources, and the variable $\code{\$\{iot:ClientId\}}$ (that we 
abbreviate as
$\cidVAR$).
\begin{definition}[Resource Expression]
A resource expression $\re$ for 
topics, topic filters, or client ids,
is a string built on the same alphabet of the 
given kind of resource extended with the variable $\cidVAR$ and with the AWS wildcards \emph{`$\code{?}$'} and \emph{`$\code{\ast}$'}.
A resource expression is \emph{ground} if it does not contain $\cidVAR$.
\end{definition}
Given a resource expression $\re$ and a string $v$, we 
write
$\re[\sfrac{v}{\cidVARa}]$ for
the resource expression where every occurrence of $\cidVAR$ is replaced with $v$.
For example, $\code{dtdMovement?/\cidVAR}[\sfrac{\mathtt{\#}}{\cidVARa}]$ is $\code{dtdMovement?/\#}$.
A ground resource expression $\rExp$ 
can be interpreted as a regular expression $\rExp^{\AWS}$ where `$?$' 
stands for
any character in $\Lambda$ extended with the MQTT wildcards, and `$\ast$' 
for
any sequence of such characters. %
The language $\lang{\rExp^{\AWS}}$ contains the resources that are matched by the resource expression $\rExp$.
\begin{example}\label{ex:rexp}
After the device connects with client id $\code{light1}$, the topic filter $\tf = \code{phyAC/floor1/dtdMovement/light1}$ is a resource matched by 
the expression $\rExp =\code{phyAC/floor?/dtdMovement/\cidVAR}$, which uses the AWS wildcard `$?$'.
Indeed, $\tf \in \lang{\rExp[\sfrac{\mathtt{light1}}{\cidVARa}]^{\AWS}}$ since $\rExp[\sfrac{\mathtt{light1}}{\cidVARa}]$ can be seen as the regular expression 
$\code{phyAC/floor[a{-}zA{-}Z0{-}9/{+}\#]/dtdMovement/light1}$.
\end{example}

We now define IoT policies: a policy establishes which client ids can connect and which topics they can publish and/or subscribe to.
\begin{definition}[Policy]\label{def:pol}
A policy $\pol$ is a set of statements $s = \polState$ 
where
  \begin{itemize}
    \item $\effect \in \{\allow, \deny\}$ 
          is the effect of the statement; 
    \item $\action \in \{\iotConn, \iotPub, \iotRec, \iotSub \}$ 
    	is an IoT action;
    \item $\RE$ is a set of resource expressions.
  \end{itemize}
  The definition of ground and of substitution is naturally extended to policies.
\end{definition}

\begin{figure}[t]
\begin{lstlisting}[basicstyle=\ttfamily\footnotesize]
{ "Effect":"Allow", 
  "Action":"iot:Connect", 
  "Resource":"*" },
{ "Effect":"Allow", 
  "Action":"iot:Receive", 
  "Resource":"*" },
{ "Effect":"Allow", 
  "Action":"iot:Subscribe",
  "Resource":
      "phyAC/floor?/dtdMovement/${iot:ClientId}" }
\end{lstlisting}
\caption{The policy of the light bulbs in the BAS.}
\label{fig:light:bulb:policy}
\end{figure}

\begin{example}
\Cref{fig:light:bulb:policy} shows the policy of a light bulb in our running example, using the resource expression of~\Cref{ex:rexp}.
\end{example}

To configure IoT Core, developers define certificates to identify the devices authorised to connect to the broker.
Devices are associated with their certificates, and 
each certificate is associated with its policies. 
Formally:
\begin{definition}[Configuration]\label{def:iot:config}
A configuration is a 5-tuple $(\certs, \pols, \varphi, \Dev, k)$ where
  \begin{itemize}
    \item $\certs$ is a set of certificates;
    \item $\pols$ is a set of IoT policies;
    \item $\varphi\colon \certs \rightarrow \mathcal{P}(\pols)$ 
    associates certificates with their policies;
    \item $\Dev$ is a set of devices;
    \item $k\colon \Dev \rightarrow \mathcal{P}(\certs)$ associates devices with their certificates.
  \end{itemize}
\end{definition}
We write $\Phi(c) = \bigcup_{\pol \in \varphi(c)} \pol$ for the union of $c$'s policies.

\begin{example}
Consider the BAS of \Cref{sec:case:study}, the certificates of both the light bulbs are associated with the policy of~\Cref{fig:light:bulb:policy},
i.e., $\varphi(c_{\codea{light1}}) = \varphi(c_{\codea{light2}})$.
\end{example}

We now introduce how permission requests are evaluated.
Intuitively, the broker interrogates the access control system to check if the action requested by the message is allowed.
To do that, the access control system checks whether there exists a policy statement $s$ in $\pol$ forbidding the request, 
i.e., with the same action of the request, with effect $\deny$, and with the resource $\res$ matched by a resource expression in $\RE$ (after expanding AWS wildcards). 
If a match is found, the request is rejected; otherwise, the system looks for a statement $s$ that explicitly allows the request. 
If there is at least one match, the request is accepted.
If no matching statement is found, the request is denied (\emph{implicit deny}).

\begin{definition}[Permission]\label{def:perm}
A permission is a pair $(\action, \rho)$ where $\alpha$ is an IoT action to be performed over a resource $\rho$.
A permission $(\action, \rho)$ is granted by the \emph{ground} policy $\pol$, in symbols 
$\pol \permit (\action, \rho)$, if and only if both the following hold:
\begin{itemize}
\item $\{ (\allow, \alpha, \{\re\} \cup \RE) \}$ exists in $\pol$ with
$\rho \in \lang{\re^{\AWS}}$;
\item 
for all
$(\deny, \alpha, \RE) \in \pol$ and $\re \in \RE$, $\rho \notin \lang{\re^{\AWS}}$.
\end{itemize}
\end{definition}

We can now detail how an attacker can break desired security properties by exploiting policy misconfigurations.
In more detail, we assume that the cloud provider (thus, the broker and authorisation mechanisms) is trusted and safe from tampering. 
On the contrary, devices may be compromised by attackers, and IoT policies may be misconfigured and subverted.
In addition, we assume that the attacker knows the configuration of the system, and it can take full control of a device $\dev$, thus gaining the capability of authenticating with all the certificates in $k(\dev)$.
The compromised device $\dev$ ignores its original scope,
and it maliciously selects both the client id to connect with and the topics over which to communicate.
The attacker's objective is to compromise either \emph{confidentiality} or \emph{integrity} of the system.
When breaking confidentiality, the attacker gains private information, e.g.~directly from some sensor $\dev$, or indirectly from other devices $\dev'$ receiving messages from $\dev$.
Conversely, the integrity of a device $\dev$ is compromised when its behaviour is influenced by the attacker, possibly leveraging an intermediate device that propagates the attacker's messages.

We conclude by clarifying how the \emph{wildcard injection attack} works.
MQTT wildcards are not considered as reserved characters when used in client ids, but they are substituted verbatim for $\cidVAR$.
During connection, attackers can use a string containing a wildcard as their client id, thus injecting it 
into the policies of the broker, possibly bypassing $\deny$ rules and breaking security guarantees.
For example, consider a certificate $c$ and a device $d$ with $c \in k(d)$, and assume that:
\begin{align*}
\Phi(c) &= \{(\allow, \iotConn, \{*\}), (\deny, \iotConn, \{\code{private}\}), \\
    & (\allow, \iotRec, \{*\}), (\allow, \iotSub, \{/\cidVAR\})\}
\end{align*}
The policy should allow every device to receive on its topic (named after its client id), while forbidding them from using the topic $/\code{private}$, which is reserved for protected information.
At first glance, eavesdropping seems impossible.
However, assume $d$ connects to the broker by identifying as $\#$.
Since $\code{\#} \in \lang{\ast^{\AWS}}$, the connection permission is granted, and the   
last statement of $\Phi(c)$ is instantiated as $(\allow, \iotSub, \{/\#\})$.
Note that the instantiated policy allows $d$ to subscribe to the topic filter $\code{/\#}$, where the character `$\code{\#}$' of the client id is interpreted as an MQTT wildcard.
Thus, the device can receive all the messages published on the topic $\code{/private}$.

In the next section, we make this formal by giving a graph representation of the possible paths along which messages can propagate through devices and topics.

\section{Characterising Information Flow}\label{sec:infoflow}

We characterise the information flow permitted by an IoT configuration, i.e. which devices can influence each other.
The underlying idea is to build a graph whose nodes represent devices
and topics: 
there is an arc from a device $\dev$ to a topic $t$ if a policy associated with the device allows it
to publish on $t$; 
vice versa there is an arc from $t$ to $\dev$ if a policy allows $\dev$ to subscribe and receive messages on $t$.    
Then, we present a symbolic representation of this graph, enabling the practical verification of security properties.

\subsection{Information Flow Graph}

It is convenient to introduce some auxiliary definitions to facilitate the construction of our graphs. 
For each certificate $c$, 
we define $\Sin_c$ as the set of topics from which the devices connected and authenticated using certificate $c$ are permitted to receive messages.
Similarly, $\Sout_c$ collects topics to which $c$ may publish messages.
\begin{definition}[I/O Sets] \label{def:ioset}
Given a configuration, the 
input and output sets of certificate $c$ are the sets of topics $\Sin_{c}$ and $\Sout_{c}$:
 \begin{align*}
    \Sin_{c} = \{ t \mid &\
				\Phi(c)[\sfrac{\cid}{\cidVARa}] \permit (\iotConn, \cid),\\
				&\ \Phi(c)[\sfrac{\cid}{\cidVARa}] \permit (\iotRec, t),\\
				&\ \Phi(c)[\sfrac{\cid}{\cidVARa}] \permit (\iotSub, \tf)\\
				&\text{ and } t \in \lang{\mathit{tf}^{\mqtt}}, \text{ for some } \cid \text{ and }\tf\}\\
    \Sout_{c} = \{ t \mid & \
				\Phi(c)[\sfrac{\cid}{\cidVARa}] \permit (\iotConn, \cid)\\
				&\text{ and } \Phi(c)[\sfrac{\cid}{\cidVARa}] \permit (\iotPub, t), \text{ for some } \cid \}
 \end{align*}
\end{definition}
Roughly speaking, the conditions on the first set require $c$ to be able to connect with client id $\cid$, to subscribe to the topic filter $\mathit{tf}$, and that topic $t$ belongs to the language of  $\mathit{tf}$.
The conditions on the second set, instead, require that $c$ can connect with client id $\cid$ and can publish over $t$.

\begin{example}\label{ex:fin:fout}
Consider the light bulb $\code{light2}$ and the topic $t = \code{phAC/floor1/dtdMovement/light1}$.
It holds that $t \in \Sin_{c_{\codea{light2}}}$ because: 
$(i)$ the light bulb can connect to the broker with client id $\#$, and the substitution gives the policy 
\begin{align*}
     P' = \{&(\allow, \iotConn, \{*\}), (\allow, \iotRec, \{*\}),\\
    &(\allow, \iotSub, \{\code{phAC/floor?/dtdMovement/\#}\})\}
\end{align*}
$(ii)$ the ground policy allows the client to receive messages on $t$ and to subscribe to the topic filter $\tf = \code{phAC/floor1/dtdMovement/\#}$; and $(iii)$ the topic $t$ is a string in the language of $\mathit{tf}$ when considering MQTT wildcards.
Note that this is a wildcard injection attack: the second light bulb receives private information intended for another device.
\end{example}

We now define information flow graphs of configurations:

\begin{definition}[Information Flow Graph]\label{def:info-graph}
  Given a configuration, its information flow graph is a bipartite directed graph $G = (\Dev \cup \topics, E)$ with
  $\Dev$ the set of devices,
  $\topics$ of all possible topics,
  and 
  $E \subseteq (\Dev \times \topics) \cup (\topics \times \Dev)$ of the arcs defined as
  \begin{align*}
  E = \{
  (d,t) \mid \exists c \in k(d) \text{ s.t. } t \in \Sout_{c}
  \}\\
  \cup\ \{
  (t,d) \mid \exists c \in k(d) \text{ s.t. } t \in \Sin_{c}
  \} 
  \end{align*}   
\end{definition}

An arc of $G$ represents a communication permitted by the configuration.
Following a path of $G$, we can track all the possible device interactions, including indirect ones.
We say that there is a (possible) information flow from $d$ to $d'$
if $d'$ is reachable from $d$ in the information flow graph,
meaning that information produced by $d$ may affect $d'$.
\begin{example}\label{ex:violating:path}
Assume that you want 
the information on $\code{lock1}$ to be private to the physical access control of the first floor in our running example.
This requirement can be checked by inspecting the information flow graph of the configuration in~\cite{repo}.
Indeed, the following path is a violation ${\code{lock1}} \rightarrow t_1 \rightarrow {\code{prsSens1}} \rightarrow t_2 \rightarrow {\code{light2}}$ 
where $t_1 = \code{phAC/floor1/prsSens1/enable}$ and $t_2 = \code{phAC/floor1/dtdMovement/light1}$.
\end{example}

As a final remark, recall that we do \emph{not} identify a client with its client id.
This is because only the association with the certificate is authenticated by the Cloud, whereas
the client id is freely selected by the device.
A compromised device might exploit this flexibility by using multiple ids or MQTT wildcards to maximise the attack surface.  

In~\Cref{sec:broker:sem}, we assess the adequacy of the information flow graph in capturing how information propagates in IoT systems.
We give an operational semantics describing the effects of MQTT requests on message transmission between devices and on the broker state.
Then, we characterise information flow as sequences of messages that may arise in some feasible execution, and we prove it consistent with the paths of the information flow graph built from the configuration.

\subsection{Symbolic Information Flow Graph}

In general, the number of nodes and arcs of the information flow graph grows exponentially with respect to the length of topics 
(256 bytes of UTF-8 characters in the current implementation of AWS IoT Core).
Clearly, building such a graph is impractical. 
To address this issue, we build a symbolic version where sets of topics are represented by logical formulas.

We define predicates representing the conditions under which a permission may be granted during some execution.
\begin{definition}[Permission Predicate] \label{def:eval}
 Given a 4-tuple $\req$ where $c$ is a certificate, $\action$ is an IoT action, and $\res$ is a resource, we 
 let the \emph{permission predicate} $\reqEval{\req}$ be defined as follows, where $\pol$ is $\Phi(c)[\sfrac{\cid}{\cidVARa}]$: 
\begin{align*}
    \reqEval{(c, \cid, \action, \res)} := 
      &\Bigl(
        \bigvee_{(\allowa, \action, \RE) \in \pol}
        \bigvee_{\re \in \RE}
        \res \in \lang{\re^{\AWS}}
      \Bigr) \land\\
    &\lnot
      \Bigl(
        \bigvee_{(\denya, \action, \RE) \in \pol}
        \bigvee_{\re \in \RE}
        \res \in \lang{\re^{\AWS}}
      \Bigr)
\end{align*} 
\end{definition}

We introduce the auxiliary predicates $\Fin_c(t)$ and $\Fout_c(t)$, mimicking input and output sets in a symbolic setting.
\begin{definition}[I/O Predicates] \label{def:iopred}
Given a configuration, the input and output predicates $\Fin_c$ and $\Fout_c$ of a certificate $c$ are:
 \begin{align*}
    \Fin_{c}(t) &= \exists \cid, \mathit{tf} .
          \reqEval{(c, \cid, \iotSub, \mathit{tf})} \land 
          \reqEval{(c, \cid, \iotRec, t)}\ \land\\
          &\qquad\reqEval{(c, \cid, \iotConn, \cid)} \land t \in \lang{\mathit{tf}^{\mqtt}};\\
    \Fout_{c}(t) &= \exists \cid . \reqEval{(c, \cid, \iotPub, t)} \land \reqEval{(c, \cid, \iotConn, \cid)}
 \end{align*}
\end{definition}

Intuitively, the symbolic version of the information flow graph has a node for each device, and an additional node for each pair of certificates that can communicate.
More precisely, for each pair of certificates $c$ and $c'$, we build a logic formula $F_{c,c'}$ that is true if and only if  $c$ can publish a message that can be read by (devices authenticated with) $c'$.
If the formula is valid, we add a node $F_{c,c'}$, and we connect it with an arc from $d$ to $F_{c,c'}$ 
for each device such that $c \in k(d)$.
Similarly, we add an arc from $F_{c,c'}$ to $d'$ if $c' \in k(d')$.
(see the next section for the algorithm to compute the symbolic graph). 
\begin{definition}[Symbolic Information Flow Graph]\label{def:info-graph:sym}
The symbolic information flow graph $G_s$ of a 
	configuration is $(N, E)$:
  \begin{align*}
    N &= \Dev \cup \{ F_{c,c'} \, \mid c,c' \in \certs\, \text{ and }\, F_{c,c'} \, \text{is true} \}\\
    E &= \{(d,F_{c,c'}) \mid c \in k(d) \} \cup \{(F_{c,c'},d') \mid c' \in k(d')\}  
  \end{align*}
  where for each $c$ and $c'$, $F_{c, c'} = \exists t . \Fout_c(t) \land \Fin_{c'}(t)$ is a logic formula
  over the (first-order) theory of regular expressions.

\end{definition}
We say that there is a (possible) symbolic information flow from $d$ to $d'$
if $d'$ is reachable from $d$ in the symbolic information flow graph $G_s$.

\begin{figure}[t]
\centering

\scalebox{0.625}[0.625]{
\begin{tikzpicture}[
	xscale=1.5,yscale=.95,
	node distance=3mm,
	arrow/.style={-stealth,lightgray},
	obj/.style={font=\normalsize\ttfamily, rounded rectangle, draw=black!50},
	weight/.style={font=\footnotesize,midway},
	title/.style={font=\fontsize{6}{6}\color{black!50}\ttfamily},
    typetag/.style={rectangle, draw=black!50, font=\scriptsize\ttfamily, anchor=west},
    topic/.style={rectangle, draw=black, thick, font=\scriptsize\ttfamily}
]

\node[obj] (light1) at (36:4cm) {$\codea{light1}$};
\node[obj] (light2) at (54:4cm) {$\codea{light2}$};

\node[obj] (abut1) at (0:4cm) {$\codea{button1}$};
\node[obj] (abut2) at (18:4cm) {$\codea{button2}$};

\node[obj] (psens1) at (252:4cm) {$\codea{prsSens1}$};
\node[obj] (psens2) at (270:3.8cm) {$\codea{prsSens2}$};

\node[obj] (ssens1) at (216:4cm) {$\codea{smoke1}$};
\node[obj] (ssens2) at (234:4cm) {$\codea{smoke2}$};

\node[obj] (lock1) at (144:4cm) {$\codea{lock1}$};
\node[obj] (lock2) at (162:4cm) {$\codea{lock2}$};
\node[obj] (pump1) at (72:4cm) {$\codea{pump1}$};
\node[obj] (pump2) at (90:3.8cm) {$\codea{pump2}$};

\node[obj] (log) at (108:4cm) {$\codea{log}$};
\node[obj] (AClist) at (126:4cm) {$\codea{AClist}$};

\node[obj] (aalarm1) at (180:4cm) {$\codea{acscAlarm1}$};
\node[obj] (aalarm2) at (198:4cm) {$\codea{acscAlarm2}$};

\node[obj] (elev) at (288:4cm) {$\codea{elevator}$};

\node[obj] (bread1) at (324:4cm) {$\codea{bdgReader1}$};
\node[obj] (bread2) at (306:4cm) {$\codea{bdgReader2}$};

\node[obj] (faman) at (342:4cm) {$\codea{fireMngr}$};

\node[topic] (abut2-pump2) at ($1/2*(psens1) + 1/2*(light2) - (.9,.8) + (-.5,.25)$) {};
\node[topic] (psens2-light1) at ($1/2*(psens2) + 1/2*(light1) + (.5,1) + (-.5,.25)$) {};
\node[topic] (psens2-light2) at ($1/2*(psens2) + 1/2*(light2) + (-.5,.25)$) {};
\node[topic] (psens1-log) at ($1/2*(psens1) + 1/2*(log) - (1,-0.25) + (-.5,.25)$) {};
\node[topic] (psens2-log) at ($1/2*(psens2) + 1/2*(log) + (-0.5,0.5) + (-.5,.25)$) {};
\node[topic] (lock1-log) at ($1/2*(lock1) + 1/2*(log) - (0,0) + (1.5,-.25)$) {};
\node[topic] (lock2-log) at ($1/2*(lock2) + 1/2*(log) - (0,0) + (1.5,.25)$) {};
\node[topic] (lock2-psens2) at ($1/2*(lock2) + 1/2*(psens2) + (-.5,.25)$) {};
\node[topic] (bread1-AClist) at ($1/2*(bread1) + 1/2*(AClist) + (0,2) + (-.5,.25)$) {};
\node[topic] (bread2-AClist) at ($1/2*(bread2) + 1/2*(AClist) + (-.5,.25)$) {};
\node[topic] (AClist-lock1) at ($1/2*(AClist) + 1/2*(lock1) + (1.5,-1.4) + (0,.25)$) {};
\node[topic] (AClist-lock2) at ($1/2*(AClist) + 1/2*(lock2) + (1,-.25)$) {};
\node[topic] (faman-lock1) at ($1/2*(faman) + 1/2*(lock1) + (-.5,.25)$) {};
\node[topic] (faman-lock2) at ($1/2*(faman) + 1/2*(lock2) + (-.5,.25)$) {};
\node[topic] (faman-pump1) at ($1/2*(faman) + 1/2*(pump1) - (0,0.5) + (-.5,.25)$) {};
\node[topic] (faman-pump2) at ($1/2*(faman) + 1/2*(pump2) - (0,0) + (-.5,.25)$) {};
\node[topic] (faman-aalarm1) at ($1/2*(faman) + 1/2*(aalarm1) - (0.5,-0.25) + (-.5,.25)$) {};
\node[topic] (faman-aalarm2) at ($1/2*(faman) + 1/2*(aalarm2) - (.5,1.2) + (-.5,.25)$) {};
\node[topic] (abut1-lock1) at ($1/2*(abut1) + 1/2*(lock1) + (-.5,.25)$) {};
\node[topic] (abut1-lock2) at ($1/2*(abut1) + 1/2*(lock2) - (1.75,0) + (-.5,.25)$) {};
\node[topic] (abut1-pump1) at ($1/2*(abut1) + 1/2*(pump1) - (1,2) + (-.5,.25)$) {};
\node[topic] (abut1-pump2) at ($1/2*(abut1) + 1/2*(pump2) - (1,0) + (-.5,.25)$) {};
\node[topic] (abut1-aalarm1) at ($1/2*(abut1) + 1/2*(aalarm1) + (-.25,1.5) + (-.5,.25)$) {};
\node[topic] (abut1-aalarm2) at ($1/2*(abut1) + 1/2*(aalarm2) + (-.5,.25)$) {};
\node[topic] (abut2-lock1) at ($1/2*(abut2) + 1/2*(lock1) + (-.5,.25)$) {};
\node[topic] (abut2-lock2) at ($1/2*(abut2) + 1/2*(lock2) + (-1,1) + (-.5,.25)$) {};
\node[topic] (abut2-pump1) at ($1/2*(abut2) + 1/2*(pump1) - (0,2.5) + (-.5,.25)$) {};
\node[topic] (abut2-aalarm1) at ($1/2*(abut2) + 1/2*(aalarm1) - (1,-0.5) + (-.5,.25)$) {};
\node[topic] (abut2-aalarm2) at ($1/2*(abut2) + 1/2*(aalarm2) - (.5,-.5) + (-.5,.25)$) {};

\draw[arrow] (psens2) -- (psens2-light1);
\draw[arrow] (psens2-light1) -- (light1);

\draw[arrow] (psens2) -- (psens2-light2);
\draw[arrow] (psens2-light2) -- (light2);

\draw[arrow] (psens1) -- (psens1-log);
\draw[arrow] (psens1-log) -- (log);
\draw[arrow] (psens2) -- (psens2-log);
\draw[arrow] (psens2-log) -- (log);
\draw[arrow] (lock1) -- (lock1-log);
\draw[arrow] (lock1-log) -- (log);
\draw[arrow] (lock2) -- (lock2-log);
\draw[arrow] (lock2-log) -- (log);
\draw[arrow] (lock2) -- (lock2-psens2);
\draw[arrow] (lock2-psens2) -- (psens2);
\draw[arrow] (bread1) -- (bread1-AClist);
\draw[arrow] (bread1-AClist) -- (AClist);
\draw[arrow] (bread2) -- (bread2-AClist);
\draw[arrow] (bread2-AClist) -- (AClist);
\draw[arrow] (AClist) -- (AClist-lock1);
\draw[arrow] (AClist-lock1) -- (lock1);
\draw[arrow] (AClist) -- (AClist-lock2);
\draw[arrow] (AClist-lock2) -- (lock2);
\draw[arrow] (faman) -- (faman-lock1);
\draw[arrow] (faman-lock1) -- (lock1);
\draw[arrow] (faman) -- (faman-lock2);
\draw[arrow] (faman-lock2) -- (lock2);
\draw[arrow] (faman) -- (faman-pump1);
\draw[arrow] (faman-pump1) -- (pump1);
\draw[arrow] (faman) -- (faman-pump2);
\draw[arrow] (faman-pump2) -- (pump2);
\draw[arrow] (faman) -- (faman-aalarm1);
\draw[arrow] (faman-aalarm1) -- (aalarm1);
\draw[arrow] (faman) -- (faman-aalarm2);
\draw[arrow] (faman-aalarm2) -- (aalarm2);
\draw[arrow] (abut1) -- (abut1-lock1);
\draw[arrow] (abut1-lock1) -- (lock1);
\draw[arrow] (abut1) -- (abut1-lock2);
\draw[arrow] (abut1-lock2) -- (lock2);
\draw[arrow] (abut1) -- (abut1-pump1);
\draw[arrow] (abut1-pump1) -- (pump1);
\draw[arrow] (abut1) -- (abut1-pump2);
\draw[arrow] (abut1-pump2) -- (pump2);
\draw[arrow] (abut1) -- (abut1-aalarm1);
\draw[arrow] (abut1-aalarm1) -- (aalarm1);
\draw[arrow] (abut1) -- (abut1-aalarm2);
\draw[arrow] (abut1-aalarm2) -- (aalarm2);

\draw[arrow] (abut2) -- (abut2-lock1);
\draw[arrow] (abut2-lock1) -- (lock1);
\draw[arrow] (abut2) -- (abut2-lock2);
\draw[arrow] (abut2-lock2) -- (lock2);
\draw[arrow] (abut2) -- (abut2-pump1);
\draw[arrow] (abut2-pump1) -- (pump1);
\draw[arrow] (abut2) -- (abut2-pump2);
\draw[arrow] (abut2-pump2) -- (pump2);
\draw[arrow] (abut2) -- (abut2-aalarm1);
\draw[arrow] (abut2-aalarm1) -- (aalarm1);
\draw[arrow] (abut2) -- (abut2-aalarm2);
\draw[arrow] (abut2-aalarm2) -- (aalarm2);

\node[topic,teal] (faman-elev) at ($1/2*(faman) + 1/2*(elev) + (-.25,.25) + (-.5,.25)$) {};
\node[topic,teal] (abut2-elev) at ($1/2*(abut1) + 1/2*(elev) - (1.25,-.75) + (-.5,.25)$) {};
\node[topic,teal] (abut1-elev) at ($1/2*(abut2) + 1/2*(elev) - (.5,0) + (-.5,.25)$) {};
\node[topic,teal] (ssens1-faman) at ($1/2*(ssens1) + 1/2*(faman) + (.2,0) + (-.5,.25)$) {};
\node[topic,teal] (ssens2-faman) at ($1/2*(ssens2) + 1/2*(faman) + (-.5,.25)$) {};
\node[topic] (psens1-light1) at ($1/2*(1,1) + 1/4*(psens1) + 1/4*(light1) + (-.5,.25)$) {};
\node[topic,red] (psens1-light2) at ($1/2*(abut2) + 1/2*(pump2) - (1,1.5) + (-.5,.25)$) {};
\node[topic,red] (lock1-psens1) at ($1/2*(lock1) + 1/2*(psens1) + (1, -.5) + (-.5,.25)$) {};

\draw[arrow,thick] (psens1) -- (psens1-light1);
\draw[arrow,thick] (psens1-light1) -- (light1);

\draw[arrow,red,thick] (psens1) -- (psens1-light2);
\draw[arrow,red,thick] (psens1-light2) -- (light2);

\draw[arrow,teal,thick] (abut2) -- (abut2-elev);
\draw[arrow,teal,thick] (abut2-elev) -- (elev);
\draw[arrow,teal,thick] (ssens1) -- (ssens1-faman);
\draw[arrow,teal,thick] (ssens1-faman) -- (faman);
\draw[arrow,teal,thick] (ssens2) -- (ssens2-faman);
\draw[arrow,teal,thick] (ssens2-faman) -- (faman);

\draw[arrow,teal,thick] (abut1) -- (abut1-elev);
\draw[arrow,teal,thick] (abut1-elev) -- (elev);
\draw[arrow,teal,thick] (faman) -- (faman-elev);
\draw[arrow,teal,thick] (faman-elev) -- (elev);

\draw[arrow,red,thick] (lock1) -- (lock1-psens1);
\draw[arrow,red,thick] (lock1-psens1) -- (psens1);

\end{tikzpicture}}
\caption{Information Flow Graph for the Building Automation System.}
\label{fig:sifg}
\end{figure}

\begin{example}\label{ex:violating:sym:path}
\Cref{fig:sifg} contains the symbolic information flow of our BAS running example.
The red path is 
${\code{lock1}} \rightarrow F_{c_{\codea{lock1}},c_{\codea{prsSens1}}} \rightarrow {\code{prsSens1}} \rightarrow F_{c_{\codea{prsSens1}},c_{\codea{light2}}} \rightarrow {\code{light2}}$.
The violating path of \Cref{ex:violating:path} is one of its instances where the topics $t_1$ and $t_2$ satisfy the formula $F_{c_{\codea{lock1}},c_{\codea{prsSens1}}}$ and $F_{c_{\codea{prsSens1}},c_{\codea{light2}}}$. 
\end{example}

The following theorem establishes that the symbolic information flow graph provides a sound and complete representation of the information flow graph which in turn is proved to match the information flow permitted by the configuration according to the operational semantics of the MQTT broker given in~\Cref{sec:broker:sem}.
\begin{restatable}{theorem}{correct}
\label{th:correctness}
	Given a configuration,
    there is an information flow from $d$ to $d'$ 
    if and only if there is a symbolic information flow from $d$ to $d'$.
\end{restatable}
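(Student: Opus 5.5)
The plan is to establish a correspondence between the two graphs at the level of single edges and single certificates, and then lift it to reachability by induction on path length. The first step is a semantic lemma connecting the symbolic and set-based notions. For every certificate $c$, client id $\cid$, action $\action$ and resource $\res$, the predicate $\reqEval{(c,\cid,\action,\res)}$ holds exactly when $\Phi(c)[\sfrac{\cid}{\cidVARa}] \permit (\action,\res)$. This is read off directly by comparing Definition~\ref{def:eval} with Definition~\ref{def:perm}: the first disjunction is the existence of an allowing statement whose expression matches $\res$, and the negated disjunction is the absence of a matching deny statement. Two points need checking. First, the substituted policy $\Phi(c)[\sfrac{\cid}{\cidVARa}]$ is ground, so $\re^{\AWS}$ is well defined. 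Second, because the substitution is applied before the regular-expression interpretation, MQTT wildcards inside $\cid$ are treated literally in the AWS matching and as wildcards only in $\lang{\tf^{\mqtt}}$. The definitions respect this in both the set-based and the symbolic setting, which is what keeps wildcard injection consistent across the two. From the lemma, unfolding Definitions~\ref{def:ioset} and~\ref{def:iopred} gives $t \in \Sin_c \iff \Fin_c(t)$ and $t \in \Sout_c \iff \Fout_c(t)$, with the existential quantifiers over $\cid$ and $\tf$ matching the "for some" clauses. It follows that $F_{c,c'}$ is true iff $\Sout_c \cap \Sin_{c'} \neq \emptyset$.

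The second step is a two-step correspondence between the graphs. In $G$, a path $d \to t \to d'$ exists iff there are $c \in k(d)$ and $c' \in k(d')$ with $t \in \Sout_c \cap \Sin_{c'}$. By the first step this holds iff some node $F_{c,c'}$ exists in $G_s$ with $c\in k(d)$ and $c'\in k(d')$, i.e.\ iff $G_s$ has a path $d \to F_{c,c'} \to d'$. One caveat: the edge set $E$ in Definition~\ref{def:info-graph:sym} is written without the restriction to existing nodes $F_{c,c'}$, so I read it as ranging over nodes in $N$ only.

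The third step is the induction. Both graphs are bipartite, with devices on one side and topics or $F$-nodes on the other. Every path in $G$ between devices therefore splits into device–topic–device segments, and every path in $G_s$ between devices splits into device–$F$–device segments. Applying the second step segment by segment, in both directions, gives reachability $d \leadsto d'$ in $G$ iff reachability in $G_s$. The case $d = d'$ with the empty path is trivial on both sides.

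The main obstacle is the statement's phrase "information flow", which the paper later also ties to the operational semantics of Appendix~\ref{sec:broker:sem}. If the theorem is meant as stated here, with information flow defined as reachability in $G$, the argument above suffices. If it is meant relative to the semantics, an additional adequacy step is needed. In that step I would show that every message chain in a feasible execution induces a path in $G$, since each delivery requires a granted connect, publish, subscribe and receive. Conversely, every path in $G$ can be realised by an execution: each certificate connects with the witness $\cid$, subscribes with the witness $\tf$, and the intermediate devices publish on the chosen topics. The delicate point in this construction is that one device may need different client ids along the path. This is handled by letting the device reconnect, or use several connections, since the model deliberately does not bind a device to a single client id.
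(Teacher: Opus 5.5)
Your proposal is correct and follows the paper's proof: Lemma~\ref{lm:setpred} (the permission predicate holds iff the instantiated policy grants the permission), then Lemma~\ref{lem:arcs-exist} (a device--topic--device step exists in $G$ iff a device--$F_{c,c'}$--device step exists in $G_s$, via $t \in \Sout_c \cap \Sin_{c'}$ iff $\Fout_c(t) \land \Fin_{c'}(t)$), then induction on the length of the flow. Information flow is defined as reachability in $G$, so your last paragraph is not needed for this theorem; the paper proves adequacy with respect to the broker semantics separately in Theorem~\ref{th:optoflowf}, much as you sketch, composing executions through disconnections.
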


The next section presents the implementation of our tool for checking information flow in IoT systems.
However, we highlight a crucial point: 
the symbolic information flow graph cannot be computed solely through operations on regular languages
(as one might initially assume, given that both AWS and MQTT wildcards can be described using regular language operators).
The issue arises from the substitution of the variable $\cidVAR$ within the policy rules that complicates the task.
In particular, $\Sin_c$ and $\Sout_c$ are not always regular languages.
For example, consider a certificate $c$ with 
\[
\Phi(c) = \{ (\allowa, \iotConn, \ast),  (\allowa, \iotPub, \cidVAR/\cidVAR) \}.
\]
The output set of topics over which $c$ can publish $\Sout_c = \{ \omega/\omega \mid \omega \in (\Lambda \cup \{ \code{/} \})^{*} \} $ is \emph{not} a regular language, as it contradicts the pumping lemma~\cite{pumping}.

\begin{figure}[t]
\centering 

\scalebox{.85}[.85]{
\begin{tikzpicture}[
	node distance=8mm,
	arrow/.style={-stealth,thick},
	iarrow/.style={-Straight Barb,double},
	obj/.style={font=\footnotesize},
	weight/.style={font=\normalsize,midway},
	title/.style={font=\small\color{black!50}\ttfamily},
    typetag/.style={rectangle, draw=black!50, font=\small\ttfamily, anchor=west},
    topic/.style={rectangle, thick, draw=black, font=\small\ttfamily}]

\node (zero) at (0,0) {
	\begin{tabular}{c}
	\includegraphics[width=5mm,keepaspectratio]{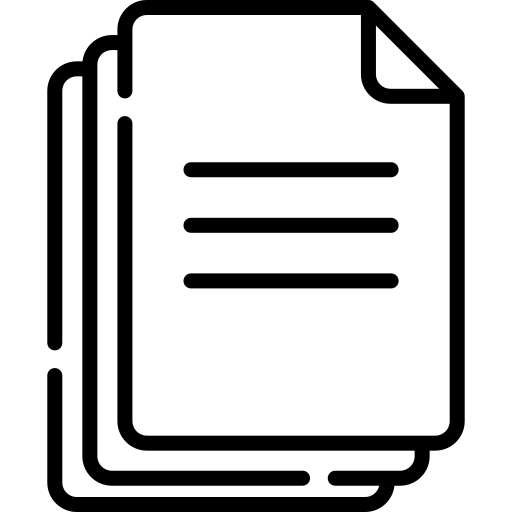}\\
	certificates\\
	and policies
	\end{tabular}
};

\node [obj, right=15mm of zero] (ppar) {
	\begin{tabular}{c}
	\includegraphics[width=5mm,keepaspectratio]{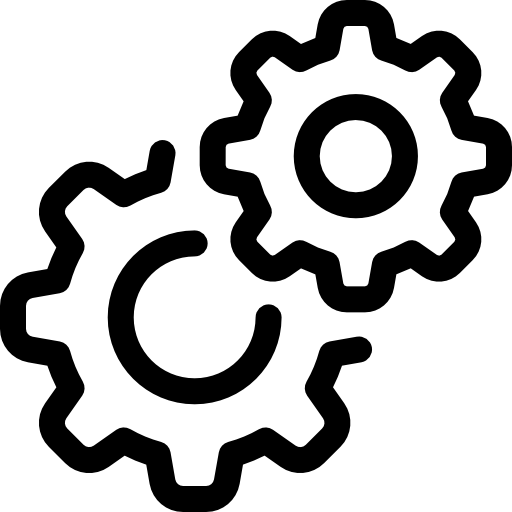}\\
	policy\\
	parser
	\end{tabular}
};
\node [obj, right=3mm of ppar] (z3) {
	\begin{tabular}{c}
	\includegraphics[width=5mm,keepaspectratio]{img/gear}\\
	SAT\\
	solver
	\end{tabular}
};

\node [obj, below=5mm of z3] (netx) {
	\begin{tabular}{c}
	\includegraphics[width=5mm,keepaspectratio]{img/gear}\\
	Network\\
	X
	\end{tabular}
};
\node [obj, left=3mm of netx] (qpar) {
	\begin{tabular}{c}
	\includegraphics[width=5mm,keepaspectratio]{img/gear}\\
	query\\
	parser
	\end{tabular}
};

\node [draw=black!40, obj, right=15mm of $1/2*(z3) + 1/2*(netx)$] (sifg) {
	\begin{tabular}{c}
	\begin{tikzpicture}
	\node[circle, draw, inner sep=0.7mm] (n1) at (0,0.2) {};
	\node[circle, draw, inner sep=0.7mm] (n2) at (0,-.1) {};
	\node[circle, draw, inner sep=0.7mm] (n1') at (.5,.3) {};
	\node[circle, draw, inner sep=0.7mm] (n2') at (.9,0) {};
	\node[circle, draw, inner sep=0.7mm] (n3') at (.5,-.3) {};
	\node[rectangle, thick, draw, inner sep=0.7mm] (t) at (0.5,0) {};
	\draw [arrow] (n1) -- (t);
	\draw [arrow] (n2) -- (t);
	\draw [arrow] (t) -- (n1');
	\draw [arrow] (t) -- (n2');
	\draw [arrow] (t) -- (n3');
	\end{tikzpicture}\\
	symbolic\\
	information\\
	flow graph\\
	\end{tabular}
	};

\node [obj, left=19mm of qpar] (usr) {
	\begin{tabular}{c}
	\includegraphics[width=7mm,keepaspectratio]{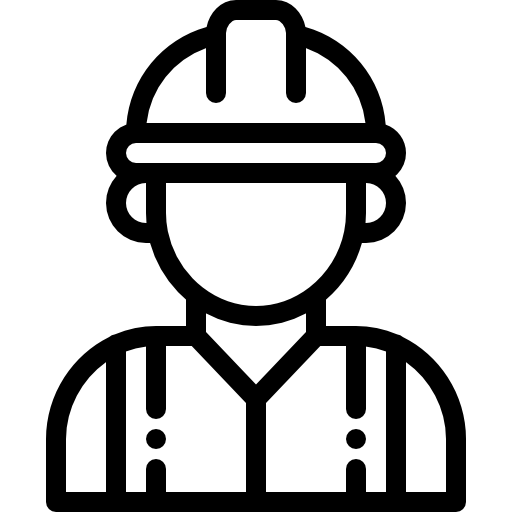}\\
	\normalsize user
	\end{tabular}
};

\node [draw=black!40, inner ysep=0mm, inner xsep=0mm, fit={(ppar) (z3)}] (siflgen) {};
\node [title, below right=.3mm and -10mm of ppar] (siflgent) {graph builder};
\node [draw=black!40, inner ysep=0mm, inner xsep=0mm, fit={(qpar) (netx)}] (qman) {};
\node [title, below right=.3mm and -10mm of qpar] (qmant) {query manager};
\begin{scope}[on background layer]
\node [fill=black!8, inner ysep=1mm, inner xsep=2.5mm, fit={(ppar) (qman) (qmant) (sifg) (qpar) (siflgen) (siflgent)}] (tool) {};
\end{scope}

\draw [arrow] (zero) -- (siflgen);
\draw [iarrow] (siflgen) -- (sifg);
\draw [iarrow] (qman) -- (sifg);
\draw [iarrow] (ppar) -- (z3);
\draw [iarrow] (qpar) -- (netx);
\draw [stealth-,thick] (usr)
	edge[bend right] node[above,xshift=-2mm] {
    \footnotesize 
    \begin{tabular}{c}
	reply and\\
	witness
	\end{tabular}} ($(qman.south west)!0.5!(qman.west)$);
\draw [-stealth,thick] 
	(usr)
	edge[bend left] node[below,xshift=-2mm,yshift=0mm] {
    \footnotesize query} 
    ($(qman.north west)!0.5!(qman.west)$);

\end{tikzpicture}
}
\caption{Overview of~\thetool.}
\label{fig:overview}
\end{figure}

\section{Implementation and Evaluation} \label{sec:tool}
We implemented our verification approach 
in \thetool~\cite{repo}, an open-source AWS IoT-policy analyzer written in Python.
Its architecture is shown in \Cref{fig:overview}: the tool takes a configuration as input, parses the policies using the library Policy Universe~\cite{policyuniverse}, and uses the SMT solver cvc5~\cite{cvc5} to construct the symbolic information-flow graph.
Finally, a query manager answers users' requests
by checking reachability in the graph via the NetworkX library~\cite{netx}.

The following subsections first describe how the symbolic information flow graph is built; then clarify which queries can be performed on the resulting graph; and finally, evaluate the effectiveness and scalability of \thetool.

\subsection{Building the graph}
\begin{algorithm}[t]
	\caption{Information Flow Graph Construction.}\label{al:sifg}
	\begin{algorithmic}[1]
		\Require a configuration $(\certs, \pols, \varphi, \Dev, k)$
		\Ensure its symbolic information flow graph
		
            \State $(N, E) \gets (\Dev, \emptyset)$
	
		\ForAll{$(c, c') \in \certs \times \certs$}
		\State build $F_{c,c'}$ as in~\Cref{def:info-graph:sym}
        \State check its satisfiability through~\Cref{al:witness}
		\If{$F_{c,c'} \text{ is satisfiable}$}\label{al:sifg:checks}
		\State $N \gets N \cup \{F_{c,c'}\}$
		\ForAll{$d$ such that $c \in k(d)$}
		\State $E \gets E \cup \{(d, F_{c,c'})\}$
		\EndFor
		\ForAll{$d'$ such that $c' \in k(d')$}
		\State $E \gets E \cup \{(F_{c,c'},d')\}$
		\EndFor
		\EndIf
		\EndFor
		\State \Return $(N, E)$
	\end{algorithmic}
\end{algorithm}

\begin{algorithm}[t]
    \caption{Witness for the predicate $F_{c,c'}$}\label{al:witness}
    \begin{algorithmic}[1]
		\Require two certificates $c$, $c'$
		\Ensure a witness $w = (\cid, \cid', t, \tf)$ or None

        \State $S \gets \reqEval{(c, \cid, \iotConn, \cid)} 
            \land \reqEval{(c', \cid', \iotConn, \cid')}
            \land \reqEval{(c, \cid, \iotPub, t)}
       $\State $\qquad\qquad
			\land\ \reqEval{(c', \cid', \iotRec, t)}
            \land \reqEval{(c', \cid', \iotSub, \tf)}$ 
            \label{al:witness:cond}
            
        \If{$\text{SMT}(S)$ \text{is unsat}}
           \State
            \Return None   \Comment{Early Fail} \label{al:witness:exit}
        \EndIf
    
        \State $S_e \gets S \land (t = \tf)$  
        \label{al:witness:easy}
        \If{$\text{SMT}(S_e)$ \text{is sat}}
           \State 
            \Return $w \gets (\cid, \cid', t, \tf)$  \Comment{Early Success}
        \EndIf

        \For{$n = 1, \dots, 8$}   
            \State $S_h \gets S \land t = t_1 / t_2 / \dots / t_{n}$ \label{al:witness:tlv}
            {\State \mbox{$S_h' \gets \tf \in \lang{(t_1 | +) / \dots / (t_{n} | + )}$ \label{al:witness:case1}}}

            \For{$m = 1, \dots, n-1$}
                \State \mbox{$S_h' \gets S_h' \lor  \tf \in \lang{(t_1 | +) / \dots / (t_{m} | +) / \#}$\label{al:witness:case2}}
            \EndFor

            \If{$\text{SMT}(S_h \land S_h')$ \text{is sat}}
               \State 
                \Return $w \gets (\cid, \cid', t, \tf)$
            \EndIf
        \EndFor
        \State \Return None

    \end{algorithmic}
\end{algorithm}

\Cref{al:sifg} describes the procedure to build the graph.
Given a configuration $(\certs, \pols, \varphi, \Dev, k)$, 
we start from a symbolic graph that has a node for each device $d \in \Dev$ and no arcs.
Then, we iterate over each pair of certificates $c$ and $c'$ of $\certs$, building the predicate $F_{c,c'}$ of \Cref{def:info-graph:sym}, encoding it into an SMT formula (via Skolemization), and invoking the solver:
if the formula is satisfiable, 
then we add a node for $F_{c,c'}$, and we connect it with all the devices associated with $c$ (incoming arc) and $c'$ (outgoing arc).

The core challenge is efficiently deciding the SMT problem, which 
amounts to finding two client ids ($\cid, \cid'$), a topic ($t$) and a topic filter ($\tf$)
that satisfy the formula $\Psi(\cid, \cid', t, \tf)$ below (which encodes $\Fout_{c}$ and $\Fin_{c'}$), 
or proving it unsatisfiable.
\begin{gather*}
\Psi(\cid, \cid', t, \tf) = \reqEval{(c, \cid, \iotConn, \cid)} 
            \land \reqEval{(c', \cid', \iotConn, \cid')}\ \land \\
            \reqEval{(c, \cid, \iotPub, t)}
			\land \reqEval{(c', \cid', \iotRec, t)}
            \land \reqEval{(c', \cid', \iotSub, \tf)}
            \land t \in \lang{\mathit{tf}^{\mqtt}}
\end{gather*}
The predicates can be encoded by resorting to the strings and regular expressions theories, but the presence of two different kinds of wildcards (AWS and MQTT) in the policies makes the problem challenging in practice:
the solver 
would take hours and hours of computation to check the satisfiability of a single predicate $F_{c,c'}$, if a naive encoding of the formula is used.
For solving this problem, we implement an effective strategy to invoke the solver in \Cref{al:witness}, which is divided into three stages.
The first two stages are heuristics for early failure and success, while the last one is a complete decision procedure, optimised for dealing with the potency of  wildcards by taking advantage of the peculiar structure of the $F_{c,c'}$ formulas.

Our heuristics are based on checking weakened and strengthened versions of $\Psi$, where the constraint $t \in \lang{\mathit{tf}^{\mqtt}}$ is approximated by simpler conditions.
In the first stage, the algorithm considers a set of conditions that are necessary but not sufficient for guaranteeing communication between $c$ and $c'$.
The constraint $t \in \lang{\mathit{tf}^{\mqtt}}$ is removed and the solver is  asked to simply determine if there exists some assignment that satisfies at least the $\reqEval{\_}$ predicates of $\Psi$.
If this is not the case, then communication is trivially impossible, and an \emph{early fail} occurs (line 4).
Note that these conditions are not sufficient for communication because we are imposing no relation between the topic filter $\tf$ and the topic $t$.
The second stage checks a condition that is sufficient but not necessary for communication: 
whether $c'$ can subscribe to the same topic $t$ on which $c$ can publish (line 7).
Basically, we ignore the MQTT wildcards and run the solver on a version of $\Psi$ where $t = \tf$ substitutes $t \in \lang{\mathit{tf}^{\mqtt}}$.
Although simple, these two heuristics greatly improve our tool's performance, solving most real-world cases.

If neither an early failure nor a solution is found, we adopt the complete resolution strategy (\emph{hard strategy}).
This third stage considers $\Psi$ as it is, also taking MQTT wildcard characters into account.
Finding an efficient SMT encoding of $t \in \lang{\tf^{\mqtt}}$ is made particularly challenging by the fact that $\tf$ is not a constant value, but it is rather dynamically computed as a solution of the other constraints.
To address this problem, we focus on the peculiar format of IoT resources, exploiting the hierarchical structure of topics $t = t_1/t_2/\dots/t_n$ and topic filters $\tf = \tf_1/\tf_2/\dots/\tf_m$, which have a maximum depth of 8 in AWS.
The algorithm essentially creates a disjunction of constraints encoding alternative formats for $t$ and $\tf$, and solves it with a specific search strategy.
In more detail, we start from the assertions of line 2 about the $\reqEval{\_}$ predicates of $\Psi$, and we add the condition of line 9 stating that $t$ has $n$ levels.
Finally, we impose $t \in \lang{\tf^{\mqtt}}$ by requiring $t$ and $\tf$ to satisfy one of the following: 
\begin{itemize}[noitemsep]
	\item the number of levels of $t$ and $\tf$ coincides (i.e., $n = m$) and, $\tf_i$ matches $t_i$ for each level $i$, i.e., $\tf_i = t_i$ or $\tf_i = \code{+}$ (line 10, recall that $\code{+}$ matches any single level of the topic);
	\item $\tf$ has no more levels than $t$ (i.e., $m \leq n$), $\tf_i$ matches $t_i$ for all $i < m$, and the last level of the topic filter  is $\tf_m = \code{\#}$ (line 12, recall that $\code{\#}$ matches any remaining sequence of levels $t_m/t_{m+1}/\dots/t_n$ of the topic  $t$).
\end{itemize}
For example, the fist case tells us that the topic $t = \code{phAC/floor1/prsSens1/enable}$ is included in $\lang{\tf_1^{\mqtt}}$ with $\tf_1 = \code{phAC/{+}/prsSens1/+}$, because each layer of $\tf_1$ is either equal to the corresponding one of $t$ or it is the wildcard character $\code{+}$, which matches both $\code{floor1}$ and $\code{enable}$ (as well as any other alphanumerical string).
Moreover, the second case tells us that the topic above is also included in $\lang{\tf_2^{\mqtt}}$ with $\tf_2 = \code{phAC/\#}$, because the wildcard character $\code{\#}$ matches $\code{floor1/prsSens1/enable}$ (as well as any other string 
over the alphabet of alphanumerical characters enriched with the forward-slash character `$/$').

Our optimisations allow \Cref{al:witness} to effectively check the satisfiability of the formula $F_{c,c'}$, as shown by our experiments below.
An upper bound to the complexity of computing the symbolic graph is given by the following theorem: 
\begin{restatable}[Complexity]{theorem}{complexity}\label{thm:complexity}
Let $(\certs, \pols, \varphi, \Dev, k)$ be a configuration, and let $f(n)$ be the cost for deciding the satisfiability of quantifier-free predicates
of size $n$ using regular expressions and string theories.
Then, the time required by \Cref{al:sifg} is at most 
$O(|\certs|^2 \cdot(|\Dev| + |\overline{\pol}| \cdot |\overline{\RE}| + f(|\overline{\pol}|\cdot |\overline{\RE}|)))$, where $|\overline{\pol}|$ and $|\overline{\RE}|$ are the size of the largest policy and resource expression.
\end{restatable}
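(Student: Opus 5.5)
The plan is a direct cost accounting of \Cref{al:sifg}, charging each iteration of the outer loop separately. The loop ranges over $\certs \times \certs$, so it runs $|\certs|^2$ times. It then suffices to show that one iteration, for a fixed pair $(c,c')$, costs $O(|\Dev| + |\overline{\pol}|\cdot|\overline{\RE}| + f(|\overline{\pol}|\cdot|\overline{\RE}|))$. Initialising $(N,E)$ with $\Dev$ costs $O(|\Dev|)$, which is dominated by the sum above.

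Within one iteration I separate three costs: building the formula, deciding it, and updating the graph.

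\textbf{Building the formula.} First I bound the size of $\reqEval{(c,\cid,\action,\res)}$ from \Cref{def:eval}. It is a disjunction over statements of $\pol = \Phi(c)[\sfrac{\cid}{\cidVARa}]$ and over their resource expressions. Each atom $\res \in \lang{\re^{\AWS}}$ has size linear in $|\re|$, since $\cid$ is kept as a string variable, so substitution is symbolic concatenation and produces no blow-up. Hence each $\reqEval{\cdot}$ has size $O(|\overline{\pol}|\cdot|\overline{\RE}|)$ and can be written down in that time. The formula $S$ of \Cref{al:witness} has five such conjuncts, so building it, and Skolemising the existential quantifiers into free string constants, costs $O(|\overline{\pol}|\cdot|\overline{\RE}|)$.

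\textbf{Deciding the formula.} \Cref{al:witness} makes at most $2 + 8$ solver calls: one for $S$, one for $S_e$, and one for each $n \le 8$. The extra constraints are $t = \tf$ and, in the hard stage, the level-decomposition constraints of lines~9--12. Their size depends only on the maximal depth $8$. This bound is fixed by AWS, so for fixed $n$ they consist of $O(n^2) = O(1)$ disjuncts over fresh variables $t_i$. Every query is therefore quantifier-free, of size $O(|\overline{\pol}|\cdot|\overline{\RE}|)$, and costs at most $f(|\overline{\pol}|\cdot|\overline{\RE}|)$, assuming $f$ is monotone and $f(O(n)) = O(f(n))$. The constant number of calls is absorbed into the $O$.

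\textbf{Updating the graph.} Lines~6--11 add one node and scan the devices twice to test $c \in k(d)$ and $c' \in k(d')$. Treating each membership test as unit cost, this is $O(|\Dev|)$. Summing the three costs and multiplying by $|\certs|^2$ gives the bound in the statement.

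The main obstacle is justifying that substituting $\cid$ and adding the MQTT-matching encoding keep the formula size linear. For the substitution, I argue it is done symbolically, by concatenating string terms, rather than by textual replacement that could multiply lengths. For the matching encoding, I rely on the constant depth bound of $8$; without that bound the hard stage would add a factor in the topic depth. A secondary point is making the assumption on $f$ (monotone and closed under constant factors) explicit, since the theorem treats $f$ abstractly.
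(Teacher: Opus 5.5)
Your proposal is correct and follows the paper's proof essentially step for step. It uses the same $|\certs|^2$ iterations, each split into building $F_{c,c'}$ from five $\reqEval{\cdot}$ conjuncts in $O(|\overline{\pol}|\cdot|\overline{\RE}|)$, at most ten solver calls, and an $O(|\Dev|)$ update that scans the devices twice; your remarks on symbolic substitution of $\cid$ and the depth bound of $8$ only make explicit what the paper leaves implicit.

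One caveat, which the paper shares: your assumption $f(O(n)) = O(f(n))$ fails for the exponential $f$ that the paper itself says is unavoidable. So strictly the solver term should read $f(c\cdot|\overline{\pol}|\cdot|\overline{\RE}|)$ for a small constant $c$.
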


Note that $f$ is at least exponential, independently of the used theories, as the problem subsumes SAT, for which only exponential solutions are known.
However, the exponential cost is only on the size of policies and on the number of resource expressions in statements, while the algorithm 
is polynomial on the size of the network (our experimental results are consistent with this estimation).
Noticeably, the size of the network is expected to increase more rapidly than the complexity of the individual policies in practice (as far as policies are adequately designed).

\subsection{Query Manager}\label{app:poker}

\thetool\ allows users to define security labels, i.e.~names $\code{s}, \code{s'}, \code{s''}, \dots$ for sets of devices that play similar roles in the IoT system.
After that, 
\thetool\ can check four kinds of queries on IoT systems. The first verifies the existence of a permitted information flow, while the others correspond to the security properties
of integrity, confidentiality, and isolation.
The query $\code{flow(s,s')}$ checks whether there is a permitted information flow from a device labelled with $\code{s}$ to a device labelled with $\code{s'}$.
The query $\code{onlyAffects(s, s')}$ checks integrity, i.e.~it requires that only a device labelled with $\code{s}$ can influence the behaviour of a critical device labelled with $\code{s'}$. 
The query $\code{onlyKnows(s, s')}$ checks confidentiality, i.e.~it requires that only devices labelled $\code{s}$ can receive private information produced by devices labelled $\code{s'}$.
Finally, the query $\code{isolated(s, s'})$ requires that $s$-labelled devices and $s'$-labelled devices cannot interact.

By~\Cref{th:correctness}, query evaluation reduces to
reachability in the symbolic graph: each query can be solved by standard search algorithms, with worst-case performance linearly proportional to the number of edges of the graph.
More specifically, $\code{flow(s,s')}$ (and $\code{onlyKnows(s', s)}$) are evaluated by visiting the graph from the nodes associated with $s$-labelled devices, and checking if at least one of (respectively, all) the visited nodes corresponds to an $s'$-labelled device. 
The same approach applies to $\code{onlyAffects(s, s')}$, but with the edges reversed.
Finally, $\code{isolated(s, s')}$ requires two traversals of the graph for checking that $s$ cannot reach $s'$ and vice-versa.

\subsection{Effectiveness Evaluation}

\begin{table}[t]
\caption{Analysing the BAS with \thetool.}
\label{tab:query}
\centering
{
\footnotesize
\setlength{\tabcolsep}{3pt}
\begin{tabular}{l c c }
    \toprule
    \textbf{Query} & \textbf{Result} & \textbf{Counterexample} \\
    \midrule
    $\codea{isolated(phyAC1, elevator)}$ & \cmark & \\
    $\codea{onlyAffects(fire, elevator)}$ & \cmark & \\
    $\codea{onlyKnows(phyAC1, lock1)}$ & \xmark & $\codea{lock1; prsSens1; light2}$ \\
    \bottomrule
\end{tabular}
}
\end{table}

We first evaluate the effectiveness of \thetool\ using the case study of \Cref{sec:case:study}. 
Due to the lack of open-source configurations for multi-device IoT systems, we constructed a representative BAS that captures the structural rules, constraints, and communication patterns of typical IoT scenarios.

We assume that each device is configured with a known IoT policy. The complete configuration is available in the online repository~\cite{repo} and includes 17 devices, 20 certificates, and 15 IoT policies, each of which consists of approximately 20 lines of code. Our threat model assumes that the cloud infrastructure is trusted, whereas devices may be vulnerable and IoT policies may be misconfigured. 
An attacker may gain full control of a vulnerable device, such as a light bulb, and maliciously select the topics to which it publishes or subscribes, exploiting overly permissive policies to compromise critical devices or access sensitive information.
Security labels identify the subsystem to which each device or cloud service belongs. For example, $\code{phyAC1}$ labels the devices involved in the physical access control of the first floor, such as $\code{bdgReader1}$, together with the related cloud services $\code{log}$ and $\code{AClist}$. Similarly, $\code{fire}$ labels the fire-alarm manager, the alarm buttons, and the smoke sensors on both floors.

We use \thetool\ to analyse the IoT policies, reconstruct the permitted device interactions, and represent them in the information-flow graph shown in \Cref{fig:sifg}. 
\Cref{tab:query} reports the queries considered in our evaluation and the corresponding results produced by \thetool. 
The first query evaluates an \emph{isolation} property: the elevator must be completely isolated from devices in $\code{phyAC1}$ subsystem. 
\thetool\ verifies that no path violates this requirement: no path starts from the elevation, and the ones reaching it -- highlighted in \textcolor{teal}{\bf teal} in \Cref{fig:sifg} -- do not come from $\code{phyAC1}$.
The second query evaluates the \emph{integrity} of the elevator, whose behaviour should depend only on devices labelled $\code{fire}$. 
The property is satisfied: 
\textcolor{teal}{\bf teal} paths originate from the intended devices.
The third query, $\codea{onlyKnows(phyAC1, lock1)}$, evaluates \emph{confidentiality}: information produced by the first-floor lock should be accessible only to devices and services labelled $\code{phyAC1}$. \thetool\ detects a violation, as a compromised light bulb can receive messages intended for other light bulbs, allowing information originating from $\code{lock1}$ to reach $\code{light2}$ on the second floor. This unintended flow results from the overly permissive policy for the light bulbs, 
which is shown in \Cref{fig:light:bulb:policy}.

\subsection{Scalability Evaluation}

\begin{table}[t]
    \caption{\thetool\ scalability over real-world configurations.}\label{tab:rw:res}
    \centering
    \footnotesize
    \setlength{\tabcolsep}{4.5pt}
        \begin{tabular}{c c | c c c c c c}
		\toprule
		& & \textbf{} & \textbf{hard}  & \multicolumn{3}{c}{\textbf{building}} & \textbf{query} \\
        & & \textbf{nodes} & \textbf{strategies}  & \multicolumn{3}{c}{\textbf{time ($s$)}} & \textbf{time ($s$)} \\
        &  & avg.   &  avg.   & min. & avg. & max. & avg. \\
		\midrule
		\multirow{ 13}{*}{\rotatebox{90}{\footnotesize \textbf{certificates}}} & 20 	&   264.6 	&  1.8  &  0.42   &  2.87  & 12.00  & 0.0060 \\
		& 40 	&   973.0	&  4.5 	&  2.30   &  6.28  & 16.55  & 0.0086 \\
		& 60 	&  2211.2	&  5.6	&  2.93   &  9.22  & 23.39  & 0.0108 \\
		& 80 	&  3903.9	& 13.6	&  5.93   & 16.34  & 29.08  & 0.0136	\\
		& 100 &  6098.1	& 16.7	&  7.50   & 19.46  & 36.55  & 0.0167	\\
		& 120 &  8764.2	& 21.5	& 13.58   & 24.69  & 41.49  & 0.0198	\\
		& 140 & 12080.1	& 28.7	& 18.61   & 31.06  & 46.63  & 0.0244	\\
		& 160 & 15415.9	& 29.6	& 20.03   & 33.01  & 44.43  & 0.0280	\\
		& 180 & 19571.7	& 37.0	& 26.61   & 37.87  & 52.79  & 0.0318	\\
		& 200 & 24209.9   & 45.8	& 31.62   & 46.06  & 57.34  & 0.0349	\\
		& 220 & 29066.2   & 50.5	& 32.39   & 50.93  & 62.02  & 0.0393	\\
		& 240 & 34732.6   & 59.0	& 42.11   & 57.68  & 62.41  & 0.0430	\\
		& 258 & 40019.0   & 65.0	& 62.35   & 62.67  & 63.53  & 0.0473	\\
		\bottomrule
\end{tabular}

\end{table}

We experimentally evaluate the scalability of \thetool\ on real-world configurations when the IoT configuration's size (number of certificates) grows, using cvc5 as our SMT solver.
We conduct all our experiments below on a desktop machine with an i7-10700K processor (3.80GHz) and 32GB RAM, running Windows 11.
We consider 258 real-world policies originally implemented for various IoT devices by different manufacturers, available online in the benchmark of the P-verifier tool~\cite{p-verifier-website}.
Using these policies, we performed 30 experiments as follows:
\begin{enumerate*}[label=(\roman*)]
    \item we generate 13 configurations with an increasing number of devices from 20 to 258; 
    \item each device is randomly associated with a distinct certificate, real-world policy (from the benchmark), and a fresh security label;
    \item we build the symbolic information flow graph of each configuration, measuring its computation time;
    \item we interrogate \thetool\ with 1000 reachability queries 
    between two random devices, and record the response time.
\end{enumerate*}

The results are in~\Cref{tab:rw:res}, where the first column indicates the size of the IoT configuration (i.e., the number of certificates), the second reports the number of nodes of the graph (both devices and symbolic nodes representing sets of topics), the third one is the number of times the hard strategy was used to determine if two devices communicate, and the last two columns contain the time needed for building the graph and solving the queries.
For each configuration size, the table reports the arithmetic mean of the results obtained by the configurations of that size, across the 30 experiments.
Moreover, for the graph build time, which is the most expensive part of the computation, we also mention the fastest and slowest runs.
When considering all the 258 real-world policies of the benchmark, \thetool\ builds the symbolic graph in less than 64 seconds. 
Moreover, the time required to build the graph increases linearly 
with the number of times the algorithm resorts to the hard strategy.
In more detail, the time needed to solve the SMT problem with the hard strategy is 0.2589 seconds on average (calculated across all 258 policies), and the worst case is 8.7353 seconds;  early fail and success take 0.0220 and 0.0122 seconds on average instead, with worst cases 1.2852 and 0.0905 seconds, respectively.
While running~\Cref{al:witness} on all 258 policies, early failure occurs 90.4\% of the time, early success 7.1\%, and hard strategy 2.5\%.
The time needed for answering each slot of 1000 queries is less than $0.1$ seconds on average.
In conclusion, the algorithm scales well up to configurations with more than 250 devices: the one-time effort for building the graph is reasonable, and evaluating queries is almost immediate.

A version of~\thetool\ also exists, which employs Z3~\cite{z3} as SMT solver in place of cvc5.
We repeated our experiments with Z3, using the same configurations generated for cvc5.
The detailed experimental results are in Appendix~\ref{app:z3}, from which it emerges that cvc5 performs significantly better than Z3.

\section{Discussion}\label{sec:ext:model}%

The model presented in \Cref{sec:model} focuses on the core communication mechanisms of MQTT and the main components of AWS IoT Core policies. While this modeling choice keeps the formal treatment tractable, it omits certain protocol and platform features available in practice that we do not capture. For example, we do not consider wildcard matching in ARN components other than the resource field, such as the region, or the possibility of temporary credentials.
In this section, we discuss how the formalization can be extended to capture and account for some of these features when verifying information-flow properties.

Among the advanced features of AWS IoT policies that we omitted, \emph{things} and \emph{conditions} are those that we can incorporate more naturally by extending our model.
Policy developers can define named virtual devices, called \emph{things}, in their configurations, which are associated with a set of custom attributes, e.g., geographical location, source IP address or domain name. 
Formally, a thing can be represented as a set of bindings from attribute names to strings which may be used in resource expressions and substituted with their values when the policy is instantiated, similarly to what is done for $\cidVAR$.
For example, in our BAS scenario of \Cref{sec:case:study} one may define the attributes $\code{\iotvar{floor}}$ and $\code{\iotvar{room}}$, and add the resource expression $\code{floornum\iotvar{floor}/roomnum\iotvar{room}/light?}$ in the policy.
As a result, two things that associate different values with the attributes above may publish on disjoint sets of topics, even if they share the same policy.
Our formal model can be updated by considering an additional special case for things that
are associated with connecting devices during certificate authentication, and a specific certificate is either for things or for common devices; hence, the kind of instantiation to use is always clear.
When the broker receives a request $r$ from a device, the policy instance governing the request is obtained by taking $\Phi(c)$ and applying the substitutions defined by the attributes of the authenticated device.
Updating the definition of the information flow graph and the procedure for building it in~\Cref{sec:infoflow} and \ref{sec:tool} is trivial, since things are inherently more constrained than common devices.
While we kept the formal development as simple as possible, our tool is capable of handling configurations that use things, provided their bindings are specified in advance.

An additional feature of policy statements is the optional ``Condition'' field that further restricts the cases in which the statement is considered when evaluating a request.
Conditioning expressions are built using a predefined finite set of operators, keys, and values~\cite{IAM}.
Such conditions require only a mild extension of our model by enriching the policy statements in \Cref{def:pol} with suitable (decidable) predicates over requests.
Then, \Cref{def:perm,def:ioset,def:eval} are updated by considering these predicates in conjunction with $\res \in \lang{\re^{\AWS}}$.

Finally, we argue that focusing on AWS IoT Core as a specific instance of IoT-Cloud deployments does not limit the generality of our methodology.
On the one hand, focusing on a specific case enables us to concretely apply our verification approach by targeting real-world configurations.
On the other hand, several MQTT brokers support authorization policies for publishing messages and subscribing to topics, e.g., Azure IoT Hub~\cite{IoTAzureAC} and HiveMQ~\cite{HiveMQAC}.
These policies are typically written in an RBAC policy language that supports wildcards to specify sets of topics.
We are confident that our methodology can be easily adapted to these MQTT brokers with very few modifications in the technical aspects, e.g., encoding roles 
through suitable predicates or slightly adjusting the treatment of wildcards. 

\section{Related Work}\label{sec:ralated}

Several papers have investigated the problem of specifying and verifying cloud-based access policies and information flow in IoT systems.
However, to the best of our knowledge, no paper in the literature addresses the verification of information flow properties in cloud-based IoT access policies.

Backes et al.~\cite{Zelkova} proposed Zelkova, a tool that statically analyses policies by encoding them into SMT formulas and using off-the-shelf solvers to verify specific requirements.
Zelkova can also compare different policies based on permission inclusion.
Building on that, D’Antoni et al.~\cite{DAntoniProjective2024} proposed an ad hoc SMT encodings for efficiently checking whether a policy is public (i.e. if the number of allowed IPs exceeds a given threshold).
With a similar approach, Eiers et al.~\cite{BultanRepair,BultanPerm,BultanQuacky} use an SMT solver to quantify the permissiveness of policies and to reduce the number of allowed requests below some predefined threshold.
Jin et al.~\cite{p-verifier} proposed P-Verifier, a tool to detect security flaws in policy configurations. 
They focus on  specific flaws like wildcard injection
that may result in the granting of unwanted permissions. 
Their verification mechanism relies on translating the policies into suitable SMT formulas and checking them with an off-the-shelf solver.
Barnett et al.~\cite{BarnettModeling2025} proposed a modular formalization of the AWS authorization engine and a corresponding SMT-based analysis tool, called IAM-MULTIPOLICYANALYZER, for verifying properties pertaining to multiple policies of different types (identity-based, resource-based policies, service and resource control policies, permissions boundaries).
The technical aspect of this work is similar to ours, namely determining the effect of combined policies.
However, they investigate how multiple policies affect permitted operations when applied together, whereas we focus on simpler policies and check whether the operations permitted to different entities allow communication.
All these papers use an SMT solver to verify that a policy satisfies some requirements. 
However, unlike our work, they do not give a formal model of IoT systems, nor do they consider information flow among them.
Since the properties they consider differ from ours, their verification mechanism is also different: they only analyse each policy in isolation to determine the permissions granted.
Consequently, it is not straightforward to apply their methodology and tools to verify information-flow properties within a network of devices, as this requires consideration of several policies.
In contrast, our verification mechanism considers all policies simultaneously and all possible device interactions to construct the information flow graph. 
This requires matching the (possibly infinite) sets of topics over which two devices may publish and subscribe, which is challenging because topics may depend on variables, like the client id, and may use wildcards.

Regarding information security in IoT systems, Bastys et al.~\cite{Bastys18} investigated the problem of securing IoT apps that use the If-This-Then-That paradigm.
Similarly, Yu et al.~\cite{TAPFixer} propose TAPFixer to detect and repair vulnerabilities in the setting of Home Automation.
They perform model-checking on the Trigger-Action Programming (TAP) rules of the analysed system.
Celik et al.~\cite{Celik19} proposed IoTGuard, a dynamic, policy-based enforcement system that monitors the runtime behaviour of IoT apps to detect violations of safety and security policies. 
Mandalari et al.~\cite{IoTrim} propose another dynamic approach to automatically classify and block the non-essential network traffic of IoT devices.
Some works~\cite{BodeiG17,Balliu21} propose process-algebra models that capture the core aspects of the behaviour of IoT systems,
and use formal methods (control-flow analysis, type systems) to prove various security properties.
All these papers share a goal similar to ours: ensuring that information flows within an IoT system complies with a given security policy. 
However, 
their approach is orthogonal to ours:
they assume access to the app's code and
take into account the interaction between sensors and actuators mediated by the environment, but they
ignore access control policies.
In contrast, we treat IoT devices as black boxes, over-approximating their behaviour based on the actions their IoT policies allow:
we consider an attacker capable of taking full control of a compromised device, subverting its code.
As a result, we do not need to re-analyse systems when a device is replaced or its behaviour changes due to an attack, as long as the access policy remains unchanged.
Moreover, to streamline the discussion, we isolate the information flow permitted by the broker's access control policy; we leave to future work the analysis of scenarios involving indirect interactions between sensors and actuators. 
To this end, we anticipate no significant challenges in integrating our information flow graph with arcs obtained with the approaches discussed above.

\section{Conclusion}\label{sec:concl}

We addressed the problem of verifying information flow in IoT access policies.
We focused on AWS IoT Core's main components as a specific instance of IoT-Cloud deployments.
First, we introduced a formal model of IoT policies,
then, we built an information flow graph to capture communication between IoT devices via MQTT topics. 
We reduce the problem of checking information flow between devices to a graph reachability problem. 
To enhance practical feasibility, we introduced a symbolic version of the graph, replacing topics with logic formulas to succinctly represent potentially infinite sets of topics.
Finally, we implemented our verification mechanism in the \thetool\ tool,
we evaluated its effectiveness on a representative scenario and its
scalability on a collection of real-world IoT policies. 

Future work aims to improve our analysis and foster the adoption of \thetool\ by practitioners.
A first extension involves considering the typical dynamicity of IoT systems, where devices may join and leave. 
Then, we will implement a change-impact analysis to determine how a policy update affects the information flow and if it preserves the security requirements.
Finally, we plan to consider scenarios where information flows 
occur through indirect interaction between devices' sensors and actuators.

\subsection*{Acknowledgements}
This work has been partially supported by RDS PTR 25-27 CYBER 2.1 "Progetto Cybersecurity" WP3 LA 3.21 - CUP: D63C24001060001. 

\bibliographystyle{IEEEtran}
\bibliography{biblio}

\clearpage

\appendices
\begin{table}[t]
	\caption{Summary of symbols and notation used in the paper.}
	\label{tab:notation}
	\centering
	\scriptsize
	\setlength{\tabcolsep}{4.5pt}
	\def\arraystretch{1.35}%
	\begin{tabular}{c l}
		\toprule
		\textbf{Notation}    & \textbf{Description}  \\
		\midrule
		$\cids$, $\topics$, $\topicfs$ & the set of client ids, topics, and topic filters\\
		$\codea{\cidVAR}$ & the variable $\codea{\$\{iot:ClientId\}}$ of AWS\\
		$\RE$   & A set of resources $\rho$ \\
		$\Dev$ & A set of devices $d$ \\
		$\Lambda$ & Alphanumerical characters including '/'\\
		$\lang{\Lambda}$ & Language of topics \\
		$\Lambda \cup \{+,\#\}$ & Alphabet of client ids and topic filters \\
		$\lang{\Lambda \cup \{+,\#\}}$ & Language of client ids and topic filters\\
		$\effect \in \{\allow, \deny\}$ & Effect of a policy statement\\
		$\action \in \{\iotConn, \iotPub, \iotRec, \iotSub \}$ & IoT actions \\
		$s = \polState$ & Policy statement (\Cref{def:pol})\\
		$(\action, \rho)$ & Permission (\Cref{def:perm})\\
		$\certs$ & A set of certificates\\
		$\pols$ & A set of policies\\
		$\varphi\colon \certs \rightarrow \mathcal{P}(\pols)$ & A map from certificates to policies\\
		$k\colon \Dev \rightarrow \mathcal{P}(\certs)$ & A map from devices to certificates\\
		$(\certs, \pols, \varphi, \Dev, k)$ & IoT Configuration (\Cref{def:iot:config})\\
		$\conns$ & A set of active connections $\con$ \\
		$\inpols\colon \conns \rightarrow \pols$ & A map from connections to polices\\
		$\subs\colon \topicfs \rightarrow \mathcal{P}(\conns)$ & A map from topic filters to sub connections\\
		$\bstate = (\conns, \inpols, \subs)$ & Broker state (\Cref{def:broker:state})\\
		$a \in \{\mqttConn(c, \cid), \mqttSub(\tf), $ & \multirow{ 2}{*}{An MQTT request (\Cref{def:mqttreq})}\\[-.1cm]
		$\mqttDis, \mqttUns(\tf), \mqttPub(t) \}$ & \\
		$\msg = (\con, a)$ & MQTT message (\Cref{def:mqttreq})\\
		$\sigma \xrightarrow{\msg} \sigma'$ & Broker transition step (\Cref{fig:opSem})\\
		$\pi$ & Broker execution trace (\Cref{def:execs})\\
		$\dev \xRightarrow{t} \dev'$ & Communication triples\\
		$\varpi$ & Communication trace (\Cref{def:comtraces}) \\
		$\receivers(\sigma, t)$ & Set of receivers from $t$ in $\sigma$ (\Cref{def:recvr})\\
		$\Com(\pi)$ & Communication traces of $\pi$ (\Cref{def:extraces})\\
		$\Sin_c$, $\Sout_c$ & Set of I/O topics (\Cref{def:ioset})\\
		$G = (\certs \cup \topics, E)$ & Information-flow Graph (\Cref{def:info-graph})\\
		$\reqEval{(c, \cid, \action, \res)}$ & Permission predicate (\Cref{def:eval})\\
		$\Fout_{c}(t)$, $\Fin_{c}(t)$ & I/O Predicates (\Cref{def:iopred})\\
		\multirow{ 2}{*}{$G_s=(N, E)$} & Symbolic Information-flow graph\\[-.1cm]
		& (\Cref{def:info-graph:sym})\\
		$F_{c, c'}$ & Topic formula (\Cref{def:info-graph:sym})\\ 
		\bottomrule
	\end{tabular}
\end{table}

\section{Formal semantics}\label{sec:broker:sem}

Here, we formally define the evolution of an IoT system by introducing a formal semantics that captures the interaction between devices and the broker.
We derived our model from the AWS official documentation~\cite{AWS} and refined it by testing corner cases with handcrafted examples.

More precisely, given a configuration as defined in \Cref{sec:model}, we define a labelled transition system (LTS): 
its states represent the internal state of the broker, its labels are MQTT messages, and the transitions describe how the broker state changes upon reception of messages.
We start from the broker state, which
is characterised by the active connections, a mapping from connections to their corresponding policies, and a mapping from topic filters to the connections of the subscribed devices. 
\begin{definition}[Broker State]\label{def:broker:state}
	Given a configuration $(\certs, \pols, \varphi, \Dev, k)$, 
	a broker state is a triple $\bstate = (\conns, \inpols, \subs)$ where
	\begin{itemize}
	\item $\conns$ is a set of active connections;
	\item $\inpols\colon \conns \rightarrow \pols$ maps connections to their policies;
	\item $\subs\colon \topicfs \rightarrow \mathcal{P}(\conns)$ maps topic filters to 
	subscribed connections.
	\end{itemize}
\end{definition} 
Note that a single device may create multiple connections for authenticating with different certificates and client ids, so as to maximise its granted permissions.
In the following, we assume a function $\devof \colon \conns \rightarrow \Dev$ to associate each connection $\con$ with its corresponding device.
Moreover, when referring to $\inpols$ and $\subs$, we represent a function as a set of pairs $x \mapsto y$.
We write \emph{function update} as $\redef{f}{x}{y}$, meaning that the result returns $y$ when the input is $x$ and behaves like $f$ otherwise.

The labels of the LTS are \emph{MQTT messages} to the broker,
sent by devices over established connections (typically TCP) for publishing payloads or changing the broker state.
\begin{definition}[MQTT Message]\label{def:mqttreq}
Let $\con$ be a connection, an MQTT message over $\con$ is a pair $\msg = (\con, a)$ where $a$ is either
    \begin{itemize}
    \item $\mqttConn(c, \cid)$ - a $\code{CONNECT}$ request;
    \item $\mqttSub(\tf)$ - a $\code{SUBSCRIBE}$ request;
    \item $\mqttDis$ - a $\code{DISCONNECT}$ request;
    \item $\mqttUns(\tf)$ - an $\code{UNSUBSCRIBE}$ request;
    \item $\mqttPub(t)$ - a $\code{PUBLISH}$ request.
  \end{itemize}
\end{definition}

Given a configuration $(\certs, \pols, \varphi,  \Dev, k)$, a transition $\sigma \xrightarrow{\msg} \sigma'$ specifies how the broker state $\sigma$ evolves into $\sigma'$ upon the reception of the MQTT message $\msg$.
Transitions are defined by the rules of \Cref{fig:opSem} if some apply, or by the idle move $\bstate \xrightarrow{\msg} \bstate$ otherwise.
The rule \rulename{Con} states that when receiving a connection request, the broker 
authenticates the device association with the certificate;
instantiates the policy of the provided certificate with the declared client id;
asks the access control system if the obtained ground policy permits the connection;
finally, if the connection is permitted, it updates the set of connected devices and records the association of the connection with the policy.
The rule \rulename{Sub} states that when receiving a subscription request, the broker 
retrieves the policy associated with the connection;
asks the access control system if the obtained ground policy permits the subscription;
finally, if the permission is granted, it updates the set of subscribers to the topic filter by adding the connection.
The rule \rulename{Disc} simply states that when receiving a disconnection request, the broker removes all the occurrences of the connection from its state.
Finally, \rulename{UnSub} manages unsubscription requests by removing from the $\subs$ function the binding between the connection and the specific topic filter, if present.
All other messages do not cause a change in the broker states, either because it is not part of its intended semantics (e.g.~\mqttPub messages), or the request is denied by the access control mechanism, or the message is discarded (e.g.~if a subscription request arrives from a device that is not connected).

\begin{figure*}
\begin{gather*}
\footnotesize
\prftree[l]
	{\rulename{Con}}
		{\msg = (\con, \mqttConn(c, \cid))}
		{c \in k(\devof(\con))}
    	{P = \Phi(c)[\sfrac{\cid}{\cidVARa}]}
    	{P \permit (\iotConn, \cid)}
		{(\conns, \inpols, \subs)  \xrightarrow{\msg} (\conns \cup \{\con\}, \redef{\inpols}{\con}{P}, \subs)}
\qquad
\footnotesize
\prftree[l]
	{\rulename{Sub}}
		{\msg = (\con, \mqttSub(\tf))}
		{\con \in \conns}
		{\inpols(\con) \permit (\iotSub, \tf)}
		{(\conns, \inpols, \subs)  \xrightarrow{\msg} 
			(\conns, \inpols, \redef{\subs}{\tf}{\subs(\tf) \cup \{ \con \}})}
\\[.25cm]
\footnotesize
\prftree[l]
	{\rulename{UnSub}}
		{\msg = (\con, \mqttUns(\tf))}
		{\con \in \conns}
		{(\conns, \inpols, \subs)  
			\xrightarrow{\msg} 
			(\conns, 
			\inpols, 
			\redef{\subs}{\tf}{\subs(\tf) \setminus \{ \con \}})}
\qquad
\footnotesize
\prftree[l]
	{\rulename{Disc}}
		{\msg = (\con, \mqttDis)}
		{\conns' = \conns \setminus \{\con\}}
		{(\conns, \inpols, \subs) \xrightarrow{\msg} 
			(\conns', 
			\{ \con' \mapsto \inpols(\con') \mid \con' \in \conns' \}, 
			\{ \tf \mapsto \subs(\tf) \cap \conns' \mid \tf \in \topicfs \})}
\end{gather*}
\caption{Broker state evolution.}
\label{fig:opSem}
\end{figure*}

\begin{definition}[Broker Execution]\label{def:execs}
A broker execution $\pi$ is a finite sequence of transitions
\[
	\sigma_0 \xrightarrow{\msg_1} \sigma_1 \xrightarrow{\msg_2} \sigma_2 \xrightarrow{\msg_3} \dots \xrightarrow{\msg_n} \sigma_n
\]
We say that $\pi$ is \emph{feasible} if $\sigma_0$ is the initial \emph{empty} state $(\emptyset, \emptyset, \{ \tf \mapsto \emptyset \mid \tf \in \topicfs \})$.
\end{definition}

Hereafter, we denote with $\pi \odot \pi'$ the concatenation of executions, which is only defined when the last element of $\pi$ coincides with the first one of $\pi'$ (in the resulting execution, we omit the first broker state of $\pi'$ to avoid duplications).

\begin{example}\label{ex:exec}
Consider the light bulb $\code{light2}$ and presence sensor $\code{prsSens1}$, and let the broker state be $\sigma = (\{ \ell \}, \{ \ell \mapsto \pol \}, \subs_\emptyset)$: $\code{prsSens1}$ is connected with $\ell$, it can publish on any topic ($\pol = \{ (\allow, \iotPub, \{*\}) \}$), and there is no subscription ($\subs_\emptyset = \{ \tf \mapsto \emptyset \mid \tf \in \topicfs\}$).
Consider the following messages where: $\code{light2}$ connects with $\ell'$ (such that $\devof(\ell') = \code{light2}$); it subscribes to the topic filter $\code{phAC/floor1/dtdMovement/\#}$; and $\code{prsSens1}$ publishes on $\code{phAC/floor1/dtdMovement/light1}$.
\begin{align*}
\msg_1 = &(\ell', \mqttConn(c_{\codea{light2}}, \code{\#}))\\
\msg_2 = &(\ell', \mqttSub(\code{phAC/floor1/dtdMovement/\#}))\\
\msg_3 = &(\ell, \mqttPub(\code{phAC/floor1/dtdMovement/light1}))
\end{align*}
Note that $\code{light2}$ attempts a wildcard injection by using $\#$ as its client id.
The execution $\pi = \sigma \xrightarrow{\msg_1} \sigma_1 \xrightarrow{\msg_2} \sigma_2 \xrightarrow{\msg_3} \sigma_3$ has
\begin{align*}
\sigma_1 &= 
	(\{ \ell, \ell' \}, \{ \ell \mapsto \pol, \ell' \mapsto \pol' \}, \subs_\emptyset)\\
\sigma_2 &= 
	(\{ \ell, \ell' \}, \{ \ell \mapsto \pol, \ell' \mapsto \pol' \},\\
	&\ \ \{ \code{phAC/floor1/dtdMovement/\#} \mapsto \{\ell'\} \} \cup  \subs_\emptyset)
\end{align*}
where  $P'$ is the policy of \Cref{fig:light:bulb:policy} instantiated as:
\begin{align*}
     P' = \{&(\allow, \iotConn, \{*\}), (\allow, \iotRec, \{*\}),\\
    &(\allow, \iotSub, \{\code{phAC/floor?/dtdMovement/\#}\})\}
\end{align*}
During the first step, the condition $P' \permit (\iotConn, \code{\#})$ holds because 
$\code{\#} \in \lang{\ast^{\AWS}}$: the chosen id is included in the resource expression of the first allow statement
 (and there is no deny statement).
In the second step, subscription is permitted by $P'$ because $\code{floor1} \in \lang{\code{floor?}^{\AWS}}$.
Finally, publication causes an idle move,
leaving the broker state unchanged 
($\sigma_3 = \sigma_2$).
\end{example}

Since we are interested in 
how information propagates between devices through the broker, we introduce communication triples $\dev \xRightarrow{t} \dev'$, meaning that $\dev$ has communicated some information to $\dev'$ through the topic $t$.  
We compose communication triples into communication traces.

\begin{definition}[Communication Trace]\label{def:comtraces}
A communication trace $\varpi$ from $\dev_0$ to $\dev_n$ is a finite sequence of communication triples
    \[
	\dev_0 \xRightarrow{t_1} \dev_1 \xRightarrow{t_2} \dev_2 \xRightarrow{t_3} \dots \xRightarrow{t_n} \dev_n.
	\]
\end{definition}
We extend $\odot$ to communication traces as expected, and
we abuse notation by writing $\Pi \odot \Pi'$ for the set containing all the elements of $\Pi$ and $\Pi'$, as well as all the defined concatenations $\varpi \odot \varpi'$ with some $\varpi \in \Pi$ and $\varpi' \in \Pi'$.
Intuitively, $\Pi \odot \Pi'$ encodes all the possible communication by considering the ones that happen separately on the two sets, as well as their concatenation when the final receiver of a trace in $\Pi$ is the first sender of another trace in $\Pi'$.

We now show how to derive communication traces from broker executions.
First, we introduce the auxiliary notion of the set of devices that can receive messages published over a topic $t$ when the broker is in a state $\sigma$.
Formally,
\begin{definition}\label{def:recvr}
Given a broker state $\sigma = (\conns, \inpols, \subs)$ and a topic $t$, the set $\receivers(\sigma, t) \subseteq \Dev$ is defined as 
\[
\receivers(\sigma, t) =\!\! \bigcup_{\tf \in \topicfs\downarrow_t}\!\! \{ \devof(\con) \mid \con \in \subs(\tf) \text{ and } \inpols(\con) \permit (\iotRec, t) \}
\]
where $\topicfs\!\downarrow_t = \{\tf \mid t \in \lang{\mathit{tf}^{\mqtt}}\}$.
\end{definition}
We can now define the communication traces resulting from a given execution $\pi$:
we consider each publish request in $\pi$ and compute its receivers by taking into account the broker state as of when it was received.
\begin{definition}[Traces of an Execution]\label{def:extraces}
	The set $\Com(\pi)$ of the communication traces of an execution $\pi$ is defined as
    \begin{align*}
		\Com(\pi) = 
        \begin{cases}
			\{ \dev \xRightarrow{t} \dev' \mid \dev = \devof(\con) &\text{if } \pi = \sigma \xrightarrow{(\con, \mqttPub(t))} \sigma' \\[-1.25mm]
            \quad\text{ and }\dev' \in \receivers(\sigma, t) \} &\quad\text{ with }\sigma = (\conns, \inpols, \subs)\\
            &\quad\text{ and }\inpols(\con) \permit (\iotPub, t)\\[1mm]
        \Com(\sigma \xrightarrow{\msg} \sigma') \odot \Com(\pi') &\text{if } \pi = \sigma \xrightarrow{\msg} \sigma' \odot \pi'\\[1.25mm]
		\emptyset	& \text{otherwise }
        \end{cases}
   	\end{align*}
\end{definition} 

\begin{example}\label{ex:request}
Let $t$ be $\code{phAC/floor1/dtdMovement/light1}$.
The trace of the broker execution $\pi$ of~\Cref{ex:exec} is
\begin{gather*}
\Com(\pi) = \emptyset \odot \emptyset \odot \Com(\sigma_2 \xrightarrow{\msg_3} \sigma_3) = \{ \code{prsSens1} \xRightarrow{t} \code{light2} \}
\end{gather*}
Note that the last message is a publish request, that the sender is $\code{prsSens1}$, and that $\code{light2}$ is subscribed to $\tf = \code{phAC/floor1/dtdMovement/\#}$. Thus, $\code{light2}$ is among the receivers because $t \in \lang{\tf^{\mqtt}}$.
\end{example}

The formal development above enables us to prove the correctness of our verification framework of \Cref{sec:infoflow}.
More precisely, the following theorem ensures that the information flow graph is 
a correct and complete representation of the information flow caused by the traces of broker executions,
also for reachability through multi-step communication traces.
Thus, we can use the graph to analyse the security properties of confidentiality, integrity, and isolation in our attacker model.
\begin{restatable}{theorem}{optoflowf}
\label{th:optoflowf}
    Given a configuration and its information flow graph $G$, for every devices $d$ and $d'$ it holds that
	$G$ has an information flow from $d$ to $d'$
	if and only if a feasible execution $\pi$ exists such that $\Com(\pi)$ contains a trace from $\dev$ to $\dev'$.
\end{restatable}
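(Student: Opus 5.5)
The plan is to prove the two directions separately, first reducing each to single arcs. Paths in $G$ alternate devices and topics, so a path from $d$ to $d'$ has the form $d=d_0 \to t_1 \to d_1 \to \dots \to t_n \to d_n = d'$. I will show that each two-step segment $d_{i-1}\to t_i\to d_i$ corresponds exactly to a single communication triple $d_{i-1}\xRightarrow{t_i} d_i$ obtainable in some feasible execution. I will then show that such single-step executions can be concatenated, and conversely that any trace arising from a feasible execution decomposes into such triples.

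\emph{Soundness of the graph (paths $\Rightarrow$ executions).} Fix an arc pair $(d,t)\in E$ and $(t,d'')\in E$. By \Cref{def:info-graph} there are $c\in k(d)$ with $t\in\Sout_c$ and $c'\in k(d'')$ with $t\in\Sin_{c'}$. Unfolding \Cref{def:ioset} gives two client ids. The first is $\cid$, with $\Phi(c)[\sfrac{\cid}{\cidVARa}]$ permitting both connecting with $\cid$ and publishing on $t$. The second is $\cid'$, together with a topic filter $\tf$, such that $\Phi(c')[\sfrac{\cid'}{\cidVARa}]$ permits connecting with $\cid'$, receiving on $t$ and subscribing to $\tf$, and moreover $t\in\lang{\tf^{\mqtt}}$.

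From this I build an explicit execution on fresh connections $\ell,\ell'$ with $\devof(\ell)=d$ and $\devof(\ell')=d''$. The messages are $(\ell',\mqttConn(c',\cid'))$, then $(\ell',\mqttSub(\tf))$, then $(\ell,\mqttConn(c,\cid))$, then $(\ell,\mqttPub(t))$, and finally $(\ell,\mqttDis)$ and $(\ell',\mqttDis)$. The rules \rulename{Con} and \rulename{Sub} fire exactly because of the permissions above, and $d''\in\receivers(\sigma,t)$ at the publish step. So by \Cref{def:extraces} this execution yields $d\xRightarrow{t}d''$.

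The two trailing \rulename{Disc} steps return the broker to the empty initial state, so the single-arc executions for consecutive segments can be concatenated with $\odot$. This needs a fresh connection name per segment, which I assume the set of connections supplies. By the $\odot$ clause of \Cref{def:extraces}, $\Com$ of the concatenation contains the concatenated trace from $d$ to $d'$. A small lemma is needed here: $\Com(\pi_1\odot\pi_2)\supseteq\Com(\pi_1)\odot\Com(\pi_2)$, proved by induction on $\pi_1$ using associativity of the set-level $\odot$.

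\emph{Completeness (executions $\Rightarrow$ paths).} I will prove by induction on $\pi$ that every triple $d\xRightarrow{t}d''$ occurring in any trace of $\Com(\pi)$, for feasible $\pi$, satisfies $(d,t),(t,d'')\in E$. Traces in $\Com(\pi)$ are concatenations of triples, so this gives the path directly. The key invariant on reachable states $\sigma=(\conns,\inpols,\subs)$ is the following: for every $\con\in\conns$ there are a certificate $c\in k(\devof(\con))$ and a client id $\cid$ such that $\inpols(\con)=\Phi(c)[\sfrac{\cid}{\cidVARa}]$ and $\inpols(\con)\permit(\iotConn,\cid)$; and for every $\con\in\subs(\tf)$ we have $\con\in\conns$ and $\inpols(\con)\permit(\iotSub,\tf)$.

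The invariant is preserved by each rule. \rulename{Con} records exactly such a policy. \rulename{Sub} checks the permission. \rulename{UnSub} only removes bindings. \rulename{Disc} purges the connection from all maps, which is why $\subs(\tf)\subseteq\conns$ is part of the invariant.

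With the invariant, a publish step by $\con$ on $t$ with $\inpols(\con)\permit(\iotPub,t)$ gives $t\in\Sout_c$. Each receiver $d''$ via $\con'\in\subs(\tf)$, with $t\in\lang{\tf^{\mqtt}}$ and $\inpols(\con')\permit(\iotRec,t)$, gives $t\in\Sin_{c'}$ for the certificate $c'$ of $\con'$.

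The main obstacle is handling connection reuse in \rulename{Con}. A second $\mqttConn$ on an already active $\con$ overwrites $\inpols(\con)$ while stale subscriptions remain in $\subs$. This would break the subscription clause of the invariant: the subscription was authorised under the old policy, not the new one. I would first check whether the paper intends \rulename{Con} to apply only when $\con\notin\conns$. If it does not, I would weaken the invariant so that it tracks, for each subscription, the policy under which it was granted. That fails too, because receive permissions are checked against the current $\inpols(\con')$ while the subscribe permission was checked against the old one. In that case the completeness direction may genuinely need the side condition $\con\notin\conns$ in \rulename{Con}, and I would state it explicitly as an assumption of the model.
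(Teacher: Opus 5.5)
Your argument works once \rulename{Con} is restricted to connections $\ell\notin\conns$, and the paper's own proof tacitly uses exactly that restriction.

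\textbf{Paths to executions.} This direction is the paper's argument. You build an explicit connect/subscribe/connect/publish execution for each arc pair, as in \Cref{lm:optoflowfi}, and make the pieces composable by draining all connections with $\mqttDis$, as in \Cref{lm:compose}. Putting the disconnects inside each segment amounts to the same thing. Since every segment returns to $\sigma_{\bot}$, you can also reuse connection names across segments, so freshness is not needed there.

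\textbf{Executions to paths.} Here your route differs from the paper's. The paper first shows that a single triple in $\Com(\pi)$ yields an arc pair (\Cref{lm:optoflowif}, resting on \Cref{lm:con,lm:pol,lm:sub}). It then inducts on trace length, which forces it to prove \Cref{lm:decompose}: for a split trace, it rebuilds a feasible execution for the suffix by deleting the $\mqttPub$ messages of the prefix. You instead observe that every triple of every trace in $\Com(\pi)$ comes from one publish step whose source state is reachable. A single reachable-state invariant then handles all triples at once, with no surgery on executions, which is shorter.

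One clause is missing from your invariant. $\Com$ only requires $\inpols(\ell)\permit(\iotPub,t)$, not $\ell\in\conns$. So you also need that $\inpols(\ell)$ is defined only for $\ell\in\conns$ (the paper's \Cref{lm:pol}) before your connection clause applies to the publisher.

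\textbf{The obstacle you raise is real.} It is not an artefact of your invariant. \rulename{Con} in \Cref{fig:opSem} has no premise $\ell\notin\conns$. Yet the paper's proof of \Cref{lm:sub}, case \rulename{Con}, simply asserts that the connection in the message is fresh in $\conns$; that unjustified step is your side condition.

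With the rule as printed, the execution-to-graph direction is false. Take:
\begin{itemize}
\item $k(d)=\{c_p\}$, with $\Phi(c_p)$ allowing $\iotConn$ and $\iotPub$ on $\ast$;
\item $k(d')=\{c_1,c_2\}$, with $\Phi(c_1)$ allowing only $\iotConn$ and $\iotSub$ on $\ast$, and $\Phi(c_2)$ allowing only $\iotConn$ and $\iotRec$ on $\ast$.
\end{itemize}
Then $\Sin_{c_1}=\Sin_{c_2}=\emptyset$, so $d'$ has no incoming arc in $G$. Now run the following, with $\devof(\ell')=d'$:
\begin{itemize}
\item $\ell'$ connects with $c_1$ and subscribes to $\code{\#}$;
\item $\ell'$ connects again, on the same connection, with $c_2$;
\item a connection of $d$ connects with $c_p$ and publishes on any topic $t$.
\end{itemize}
This gives $d\xRightarrow{t}d'\in\Com(\pi)$. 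The reason is that $\receivers$ checks $\iotRec$ against the current policy $\Phi(c_2)$, while the stale subscription was authorised by $\Phi(c_1)$.

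So you should state $\ell\notin\conns$ explicitly as a premise of \rulename{Con}. Alternatively, let a reconnect first remove $\ell$ from $\subs$. Either choice is faithful to MQTT, where a second CONNECT on the same network connection is a protocol violation and the broker closes the connection. With that premise every rule preserves your invariant, and your proof is complete.
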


\section{Proofs}\label{ap:cost}

\newcommand{\sbot}{\sigma_{\bot}}

We first prove that our symbolic characterization of the information flow graph is correct.
Then, we ensure consistency of the permitted information flows described by the operational semantics with the paths of the information flow graph.
Finally, we prove the complexity of~\Cref{al:sifg}.

\subsection{Consistency of the Symbolic Information Flow Graph}

We establish the following consistency result between the intentional description of the policy interrogation of~\Cref{def:perm} and the predicate of~\Cref{def:eval}.
\begin{lemma}\label{lm:setpred}
	Given a configuration $I$, the permission predicate $\reqEval{\req}$ is true if and only if $\Phi(c)[\sfrac{\cid}{\cidVAR}] \permit (\alpha, \rho)$.
\end{lemma}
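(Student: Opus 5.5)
The plan is to unfold both sides and observe that they coincide clause by clause. Fix $c$, $\cid$, $\alpha$ and $\rho$, and write $\pol = \Phi(c)[\sfrac{\cid}{\cidVARa}]$. This is exactly the ground policy used in \Cref{def:eval}. It is ground because substitution replaces every occurrence of $\cidVAR$ in every resource expression of every statement. So \Cref{def:perm} applies to it directly. The predicate $\reqEval{(c,\cid,\alpha,\rho)}$ is a conjunction of two parts. The first is a disjunction $A$ ranging over allow statements for $\alpha$ and their resource expressions. The second is the negation of a disjunction $D$ ranging over deny statements for $\alpha$ and their resource expressions. Likewise, $\pol \permit (\alpha,\rho)$ is a conjunction of two bullet conditions. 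I will show that $A$ is equivalent to the first bullet and that $\lnot D$ is equivalent to the second. The lemma then follows by conjoining the two equivalences.

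For the allow part, $A$ holds iff there exist a statement $(\allow,\alpha,\RE)\in\pol$ and some $\re\in\RE$ with $\rho\in\lang{\re^{\AWS}}$. Writing $\RE = \{\re\}\cup(\RE\setminus\{\re\})$ turns this into exactly the first bullet of \Cref{def:perm}. The pattern $(\allow,\alpha,\{\re\}\cup\RE)$ in that bullet is simply a way of saying that $\re$ belongs to the resource set. Conversely, a witness for the bullet gives a true disjunct of $A$. For the deny part, $\lnot D$ says by De Morgan that for every $(\deny,\alpha,\RE)\in\pol$ and every $\re\in\RE$ we have $\rho\notin\lang{\re^{\AWS}}$. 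This is literally the second bullet. Both disjunctions may be infinite-looking but are in fact finite, since policies are finite sets of statements with finite resource sets. Even so, only their truth values matter, so no finiteness argument is actually needed.

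The only point needing care is checking that the substitution commutes with the way the two definitions are read. \Cref{def:eval} quantifies over statements of the already substituted $\pol$, which is the same object to which $\permit$ is applied. So no separate lemma about substitution is required. Once this is observed, the proof is a direct definitional unfolding, and I expect no real obstacle.
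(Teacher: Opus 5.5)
Your proposal is correct and is essentially the paper's argument. The paper states that the lemma holds by definition and notes that it can be formalised by induction on the cardinality of $\Phi(c)$; your clause-by-clause unfolding (reading the finite disjunctions as existential statements and applying De Morgan to the deny part) reaches the same result directly, makes that induction unnecessary, and also covers the empty-disjunction cases automatically.
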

\begin{proof}
	The result trivially holds by definition, and can be proved by induction on the cardinality of $\Phi(c)$.
\end{proof}

The following lemma ensures that two devices are reachable in one step in the information flow graph if and only if the same is true in the symbolic version.
\begin{lemma}\label{lem:arcs-exist}
	Given a configuration, let $G = (N,E)$ be its information flow graph, and $G_s=(N_s, E_s)$ the symbolic version.
	Let $d, d'$ be a pair of devices, then the two following statements are equivalent:
	\begin{enumerate}
		\item a topic $t$ exists such that $\{(d,t), (t, d')\} \subseteq E$;
		\item two certificates $c$, $c'$ exist such that $\{(d,F_{c,c'}), (F_{c,c'}, d')\} \subseteq E_s$.
	\end{enumerate}
\end{lemma}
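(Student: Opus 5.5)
The plan is to prove the two implications separately. Both directions rely on two facts: the definitions of $\Sin_c$ and $\Sout_c$ (\Cref{def:ioset}) are the set-theoretic counterparts of $\Fin_c$ and $\Fout_c$ (\Cref{def:iopred}), and the arcs of $G$ and $G_s$ are generated by the same map $k$. The bridge is \Cref{lm:setpred}, which lets us replace each conjunct $\reqEval{(c,\cid,\alpha,\rho)}$ with $\Phi(c)[\sfrac{\cid}{\cidVAR}] \permit (\alpha,\rho)$ and back.

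The first auxiliary step is to show that $t \in \Sout_c$ iff $\Fout_c(t)$ holds, and $t \in \Sin_c$ iff $\Fin_c(t)$ holds. For $\Sout_c$, the membership condition is the existence of a $\cid$ with $\Phi(c)[\sfrac{\cid}{\cidVAR}] \permit (\iotConn,\cid)$ and $\Phi(c)[\sfrac{\cid}{\cidVAR}] \permit (\iotPub,t)$. By \Cref{lm:setpred}, these two conjuncts are equivalent to $\reqEval{(c,\cid,\iotConn,\cid)}$ and $\reqEval{(c,\cid,\iotPub,t)}$, and the existential over $\cid$ matches. The input case is identical, with the extra existential over $\tf$ and the unchanged conjunct $t \in \lang{\tf^{\mqtt}}$.

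For (1)$\Rightarrow$(2), assume $(d,t),(t,d') \in E$. By \Cref{def:info-graph} there are $c \in k(d)$ with $t \in \Sout_c$ and $c' \in k(d')$ with $t \in \Sin_{c'}$. The auxiliary step gives $\Fout_c(t) \land \Fin_{c'}(t)$, so $t$ witnesses $F_{c,c'} = \exists t.\, \Fout_c(t)\land\Fin_{c'}(t)$, which is therefore true. By \Cref{def:info-graph:sym}, $F_{c,c'} \in N_s$. Since $c \in k(d)$ and $c' \in k(d')$, both arcs $(d,F_{c,c'})$ and $(F_{c,c'},d')$ lie in $E_s$.

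For (2)$\Rightarrow$(1), assume both symbolic arcs are present. The definition of $E_s$ needs care here. As literally written, its arc sets range over all $F_{c,c'}$ and do not restrict to nodes in $N_s$. I will read $E_s$ as a subset of $N_s \times N_s$, so the arcs force $F_{c,c'}$ to be true; this matches \Cref{al:sifg}, which adds arcs only when the formula is satisfiable. I also need the arc $(d,F_{c,c'})$ to mean $c \in k(d)$, and not merely that some $\tilde c$ with $\tilde c \in k(d)$ produces a syntactically equal node. I will treat the node $F_{c,c'}$ as indexed by the pair $(c,c')$, so this follows directly. Then from $c\in k(d)$, $c'\in k(d')$ and a witness $t$ of $F_{c,c'}$, the auxiliary step gives $t\in\Sout_c\cap\Sin_{c'}$, hence $(d,t),(t,d')\in E$.

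The only real obstacle is these two representational points about $G_s$: that arcs exist only for true formulas, and that nodes are indexed by certificate pairs. Once they are fixed, everything else is unfolding definitions through \Cref{lm:setpred}.
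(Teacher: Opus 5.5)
Your proposal is correct and follows the same route as the paper. Both reduce the lemma to $t \in \Sout_c \iff \Fout_c(t)$ and $t \in \Sin_{c'} \iff \Fin_{c'}(t)$ via \Cref{lm:setpred}, and then unfold \Cref{def:info-graph} and \Cref{def:info-graph:sym}. Your two explicit readings of $E_s$ (arcs exist only for satisfiable $F_{c,c'}$, and nodes are indexed by certificate pairs) simply make precise what the paper's one-line ``quantifier commutativity'' step leaves implicit.
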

\begin{proof}
	We first note that, by \Cref{def:info-graph}, 
	the first statement holds if and only if 
	there exists a $t$ such that $t \in \Sout_{c} \cap \Sin_{c'}$.
	Notice that~\Cref{lm:setpred} implies:
	\[
	t \in \Sout_c \text{ iff }\Fout_c(t) 
	\qquad \text{and} \qquad 
	t \in \Sin_{c'} \text{ iff }\Fin_{c'}(t).
	\]
	By~\Cref{def:info-graph:sym}, 
	this coincides with $t$
	satisfying $F_{c,c'}$, and by quantifier commutativity it is the same as the second statement.
\end{proof}

The following generalizes the previous result to multi-step reachability.
\correct*
\begin{proof}
The statement can be proved by induction on the length of the information flow, resorting to~\Cref{lem:arcs-exist} for proving that single steps coincides.
\end{proof}

\subsection{Consistency of Semantics and Information Flow Graph}

\begin{notation}
In the following, we write:
\begin{itemize}
    \item $\varsigma_\bot$ for the function $\{ \tf \mapsto \emptyset \mid \tf \in \topicfs \}$;
    \item $\sbot$ for the initial broker state $(\emptyset, \emptyset, \varsigma_\bot)$;
    \item $\sigma \rightarrow^{*} \sigma'$ when $\sigma \xrightarrow{\msg_1} \sigma_1 \xrightarrow{\msg_2} \dots \xrightarrow{\msg_n} \sigma'$ for some $\msg_1, \dots \msg_n$ and $\sigma_1, \dots \sigma_{n-1}$;
    \item $\dev \rightarrow^{*} \dev'$ when $\dev \xrightarrow{t_1} \dev_1 \xrightarrow{t_2} \dots \xrightarrow{t_n} \dev'$ for some $t_1, \dots t_n$ and $\dev_1, \dots \dev_{n-1}$;
\end{itemize}
\end{notation}

The next auxiliary results proves that necessary conditions applies to reachability of some given broker states.
\begin{lemma}\label{lm:con}
    Let $\sigma$ be a broker state such that $\sbot \rightarrow^{*} \sigma = (\conns, \inpols, \subs)$, and let $\con \in \conns$, then $\inpols(\con) = \Phi(c)[\sfrac{\cid}{\cidVAR}]$ for some $c$ and $\cid$ such that $c \in k(\devof(\con))$ and $\Phi(c)[\sfrac{\cid}{\cidVAR}] \permit (\iotConn, \cid)$.
\end{lemma}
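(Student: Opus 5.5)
We argue by induction on the length $n$ of a feasible execution $\sbot \xrightarrow{\msg_1} \sigma_1 \xrightarrow{\msg_2} \dots \xrightarrow{\msg_n} \sigma_n = \sigma$. The invariant to carry along is exactly the statement of the lemma: every $\con \in \conns$ satisfies $\inpols(\con) = \Phi(c)[\sfrac{\cid}{\cidVAR}]$ for some $c \in k(\devof(\con))$ and some $\cid$ with $\Phi(c)[\sfrac{\cid}{\cidVAR}] \permit (\iotConn, \cid)$. In the base case $n = 0$ we have $\sigma = \sbot$, whose set of connections is empty, so the invariant holds vacuously.

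For the inductive step, we assume the invariant for $\sigma_{n-1} = (\conns, \inpols, \subs)$ and do a case analysis on which rule of \Cref{fig:opSem} produces $\sigma_{n-1} \xrightarrow{\msg_n} \sigma_n$.
\begin{itemize}
\item For the idle move, \rulename{Sub} and \rulename{UnSub}, the components $\conns$ and $\inpols$ are unchanged, so the invariant transfers directly.
\item For \rulename{Disc}, the new connection set $\conns' = \conns \setminus \{\con\}$ is a subset of the old one. The new map agrees with $\inpols$ on $\conns'$, so each remaining connection keeps its witnesses $c, \cid$.
\item For \rulename{Con} with message $(\con, \mqttConn(c,\cid))$, the new set is $\conns \cup \{\con\}$ and the new map is $\redef{\inpols}{\con}{P}$, where $P = \Phi(c)[\sfrac{\cid}{\cidVAR}]$.
\begin{itemize}
\item For the fresh binding at $\con$, the premises of the rule give exactly $c \in k(\devof(\con))$ and $P \permit (\iotConn, \cid)$, so $c$ and $\cid$ are the required witnesses.
\item For any other $\con' \neq \con$, the map is unchanged and we use the induction hypothesis.
\item If $\con$ was already in $\conns$, its old value is simply overwritten, and the new value is justified by the rule's premises as above.
\end{itemize}
\end{itemize}

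The only point requiring care is the \rulename{Con} case. There we must observe that $\devof$ is a fixed global function, so the device associated with $\con$ is the same one checked in the premise. We must also note that the function update affects only the entry at $\con$. Everything else is routine bookkeeping, so no real obstacle is expected.
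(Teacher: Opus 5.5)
Your proof is correct and follows the paper's argument: induction on the length of the feasible execution from $\sigma_{\bot}$, a vacuous base case, and a case split on the rules of \Cref{fig:opSem}, with the premises of \rulename{Con} supplying the witnesses $c,\cid$ for the new binding and the induction hypothesis covering every other connection. Your separate treatment of a \rulename{Con} message on an already-active connection, whose old entry is overwritten, is a point the paper's proof passes over quickly, so it is a useful clarification rather than a different route.
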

\begin{proof}
We prove the property by induction on the length of the execution $\pi$ from $\sbot$ to $\sigma$.
The state $\sbot$ vacuously satisfies the lemma, as no $\con$ is in $\conns = \emptyset$.
Assume  $\pi = \pi' \odot (\sigma \xrightarrow{\msg} \sigma')$ where $\sigma = (\conns, \inpols, \subs)$ and $\sigma' = (\conns', \inpols', \subs')$.
Then, by induction hypothesis, if $\con \in \conns$, then $\inpols(\con) = \Phi(c)[\sfrac{\cid}{\cidVAR}]$ for some $c$ and $\cid$ such that $c \in k(\devof(\con))$ and $\Phi(c)[\sfrac{\cid}{\cidVAR}] \permit (\iotConn, \cid)$.
We assume that 
$\con' \in \conns'$, and
show by cases on the rules in~\Cref{fig:opSem} that the same properties hold for $\sigma'$, namely
$\inpols'(\con') = \Phi(c')[\sfrac{\cid'}{\cidVAR}]$ for some $c'$ and $\cid'$ such that $c' \in k(\devof(\con'))$ and $\Phi(c')[\sfrac{\cid'}{\cidVAR}] \permit (\iotConn, \cid')$.
\begin{description}
\item[Case \rulename{Con}:] for $\con' \in \conns'$ to be true it must be that either $\con' \in \conns$, or
	$\con'$ is the connection in the message $\msg$.
	In the former case, the thesis follows by induction hypothesis since $\redef{\inpols}{\con'}{P}(\con) = \inpols(\con)$. 
	In the latter case, the thesis holds thank to the premises of the derivation rule, which coincide with our desiderata.
\item[Cases \rulename{Sub} and \rulename{unSub}:] notice that $\conns' = \conns$ and $\inpols' = \inpols$, thus for $\con' \in \conns'$ to be true it must be that $\con' \in \conns$ and we can apply the induction hypothesis.
\item[Case \rulename{Disc}:] induction hypothesis suffices because it must be that $\con' \in \conns$, and it holds by construction that $\inpols'(\con'') =  \inpols(\con'')$ for each $\con''$ for which both functions are defined.
\item[Default idel move:] if not rule applies then $\sigma = \sigma'$ and the result trivially follows from induction hypothesis.\qedhere
\end{description}
\end{proof}

\begin{lemma}\label{lm:pol}
    Let $\sigma$ be a broker state such that $\sbot \rightarrow^{*} \sigma = (\conns, \inpols, \subs)$, if $\inpols(\con)$ is defined, then $\con \in \conns$. 
\end{lemma}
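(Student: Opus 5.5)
We argue by induction on the length of the execution $\pi$ from $\sbot$ to $\sigma$, in the same way as \Cref{lm:con}. The invariant we maintain is exactly the statement: every connection on which $\inpols$ is defined belongs to $\conns$, i.e.\ $\mathrm{dom}(\inpols) \subseteq \conns$.

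For the base case, $\sbot = (\emptyset, \emptyset, \varsigma_\bot)$. Here $\inpols$ is the empty function, so the claim holds vacuously.

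For the inductive step, write $\pi = \pi' \odot (\sigma \xrightarrow{\msg} \sigma')$ with $\sigma = (\conns, \inpols, \subs)$ and $\sigma' = (\conns', \inpols', \subs')$. Assume $\mathrm{dom}(\inpols) \subseteq \conns$; we show $\mathrm{dom}(\inpols') \subseteq \conns'$ by cases on the rules of \Cref{fig:opSem}.
\begin{description}
\item[Case \rulename{Con}:] we have $\conns' = \conns \cup \{\con\}$ and $\inpols' = \redef{\inpols}{\con}{P}$. Hence $\mathrm{dom}(\inpols') = \mathrm{dom}(\inpols) \cup \{\con\} \subseteq \conns \cup \{\con\} = \conns'$.
\item[Cases \rulename{Sub} and \rulename{UnSub}:] both $\conns$ and $\inpols$ are unchanged, so the induction hypothesis applies directly.
\item[Case \rulename{Disc}:] by construction, $\inpols'$ is defined as $\{\con' \mapsto \inpols(\con') \mid \con' \in \conns'\}$. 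Its domain is therefore contained in $\conns'$ by definition, and this case does not even need the induction hypothesis.
\item[Idle move:] here $\sigma' = \sigma$, and the claim follows from the induction hypothesis.
\end{description}

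The only point needing care is the \rulename{Disc} case. In the rule as written, the comprehension ranges over all of $\conns'$, so $\inpols(\con')$ could in principle be undefined for some $\con' \in \conns'$. This does not affect the claim: the domain of $\inpols'$ can only shrink to a subset of $\conns'$, so the inclusion still holds. No other case presents an obstacle; the argument is a routine invariant check.
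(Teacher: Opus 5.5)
Your proof is correct and follows the paper's argument: induction on the length of the execution from $\sbot$, with a case analysis on the rules of \Cref{fig:opSem}. The only difference is in the \rulename{Disc} case. The paper uses the induction hypothesis together with the fact that $\con'$ differs from the disconnected connection to place $\con'$ in $\conns'$, whereas you read $\mathrm{dom}(\inpols') \subseteq \conns'$ directly off the comprehension defining $\inpols'$, which is slightly more direct.
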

\begin{proof}
We prove the property by induction on the length of the execution $\pi$ from $\sbot$ to $\sigma$.
The state $\sbot$ vacuously satisfies the lemma, as the domain of $\inpols$ is empty.
Assume  $\pi = \pi' \odot (\sigma \xrightarrow{\msg} \sigma')$ where $\sigma = (\conns, \inpols, \subs)$ and $\sigma' = (\conns', \inpols', \subs')$, and consider the rules in~\Cref{fig:opSem}.
For the \rulename{Con} rule, $\inpols'(\con')$ is defined only if either $\inpols(\con')$ is defined, or if $\con'$ is the connection in the $\msg$ message.
In the former case, the thesis follows by induction hypothesis; in the latter, it holds by construction.
If the rule \rulename{Disc} is used to build the transition, then $\inpols'(\con')$ defined implies that also $\inpols(\con')'$ is also defined, and $\con' \neq \con''$ with $\con''$ the connection in $\msg$.
Then the thesis follows by induction hypothesis.
The other cases hold by induction hypothesis, because $\inpols' = \inpols$ and $\conns' = \conns$.
\end{proof}

\begin{lemma}\label{lm:sub}
    Let $\sigma$ be a broker state such that $\sbot \rightarrow^{*} \sigma = (\conns, \inpols, \subs)$, and let $\con \in \subs(\tf)$, then 
    $\inpols(\con) = \Phi(c)[\sfrac{\cid}{\cidVAR}]$ for some $c$ and $\cid$ such that 
    $\Phi(c)[\sfrac{\cid}{\cidVAR}] \permit (\iotSub, \tf)$, $c \in k(\devof(\con))$ and $\Phi(c)[\sfrac{\cid}{\cidVAR}] \permit (\iotConn, \cid)$.
\end{lemma}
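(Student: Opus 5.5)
We argue by induction on the length of the feasible execution $\pi$ from $\sbot$ to $\sigma$, in the same way as \Cref{lm:con} and \Cref{lm:pol}. We strengthen nothing: the invariant is exactly the statement, for every $\tf$ and every $\con \in \subs(\tf)$.

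\textbf{Base case.} In $\sbot$ we have $\subs = \varsigma_\bot$, so $\subs(\tf) = \emptyset$ for all $\tf$ and the claim holds vacuously.

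\textbf{Inductive step.} Let $\pi = \pi' \odot (\sigma \xrightarrow{\msg} \sigma')$, with $\sigma = (\conns, \inpols, \subs)$ and $\sigma' = (\conns', \inpols', \subs')$, and assume $\con \in \subs'(\tf)$. We proceed by cases on the rule used.

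\begin{description}
\item[Case \rulename{Sub}:] $\inpols' = \inpols$. If $\con \in \subs(\tf)$, the induction hypothesis applies directly. Otherwise $\con$ is the connection of $\msg$ and $\tf$ is the subscribed filter. The premises then give $\con \in \conns$ and $\inpols(\con) \permit (\iotSub, \tf)$. Since $\sigma$ is reachable, \Cref{lm:con} gives $\inpols(\con) = \Phi(c)[\sfrac{\cid}{\cidVAR}]$ with $c \in k(\devof(\con))$ and the connect permission, which together yield the thesis.
\item[Case \rulename{Con}:] $\subs' = \subs$ and $\inpols' = \redef{\inpols}{\con''}{P}$. If $\con \neq \con''$, the induction hypothesis suffices. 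If $\con = \con''$, the connection is overwritten with a new policy $P$, which need not permit $(\iotSub, \tf)$. This reconnection case is the main obstacle. I would handle it in one of two ways. The first is to argue that the previous connection must have been disconnected, which clears $\subs$ via \rulename{Disc}; but the rule \rulename{Con} has no premise $\con'' \notin \conns$, so this does not follow from the rules. The second, which I would adopt, is to read \rulename{Con} with the implicit side condition that $\con''$ is a fresh connection, i.e.\ not in $\conns$. Then \Cref{lm:pol} and the fact that $\subs(\tf) \subseteq \conns$ show that $\con''$ was in no subscription set, so this case is vacuous. This containment $\subs(\tf) \subseteq \conns$ is an auxiliary invariant that I would prove alongside the statement.
\item[Case \rulename{UnSub}:] $\subs'(\tf) \subseteq \subs(\tf)$ and $\inpols' = \inpols$, so the induction hypothesis applies.
\item[Case \rulename{Disc}:] $\subs'(\tf) = \subs(\tf) \cap \conns'$ and $\inpols'$ agrees with $\inpols$ on $\conns'$, so the induction hypothesis applies.
\item[Idle move:] $\sigma' = \sigma$, so the induction hypothesis applies.
\end{description}
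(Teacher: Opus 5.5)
Your proof is correct and follows the paper's own argument: induction on the length of the execution, a case split on the rules of \Cref{fig:opSem}, \Cref{lm:con} in the \rulename{Sub} case, and a freshness assumption on the connecting $\con$ in the \rulename{Con} case. The paper makes the same freshness assumption without stating it as a premise of \rulename{Con}, so you are right to flag it. Your auxiliary invariant $\subs(\tf) \subseteq \conns$ supplies the step the paper leaves implicit, namely that a fresh connection cannot already be a subscriber.
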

\begin{proof}
We prove the property by induction on the length of the execution $\pi$ from $\sbot$ to $\sigma$.
The state $\sbot$ vacuously satisfies the lemma.
Assume  $\pi = \pi' \odot (\sigma \xrightarrow{\msg} \sigma')$ where $\sigma = (\conns, \inpols, \subs)$ and $\sigma' = (\conns', \inpols', \subs')$.
Then, by induction hypothesis, if $\con \in \subs(\tf)$, then 
$\inpols(\con) = \Phi(c)[\sfrac{\cid}{\cidVAR}]$ for some $c$ and $\cid$ such that 
$\Phi(c)[\sfrac{\cid}{\cidVAR}] \permit (\iotSub, \tf)$, $c \in k(\devof(\con))$ and $\Phi(c)[\sfrac{\cid}{\cidVAR}] \permit (\iotConn, \cid)$.

We assume that 
$\con' \in \subs'(\tf')$, and
show by cases on the rules in~\Cref{fig:opSem} that the same properties hold for $\sigma'$, namely
$\inpols'(\con') = \Phi(c')[\sfrac{\cid'}{\cidVAR}]$ for some $c'$ and $\cid'$ such that 
$\Phi(c')[\sfrac{\cid'}{\cidVAR}] \permit (\iotSub, \tf')$, $c' \in k(\devof(\con'))$ and $\Phi(c')[\sfrac{\cid'}{\cidVAR}] \permit (\iotConn, \cid')$.
\begin{description}
\item[Case \rulename{Con}:] the result follows from the fact that $\subs = \subs'$, thus $\con' \in \subs'(\tf')$ implies $\con' \in \subs(\tf)$.
In addition, $\con''$ of $\msg$ is fresh in $\conns$, thus $\inpols(\con') = \redef{\inpols'}{\ell'}{P}(\con') = \inpols(\con')$.
\item[Case \rulename{Sub}:] 
if the topicfilter $\tf''$ in $\msg$ is $\tf'$, then from~\Cref{lm:con} it holds that $\con' \in \conns$ implies 
$\inpols(\con') = \inpols'(\con') = \Phi(c')[\sfrac{\cid'}{\cidVAR}]$ for some $c'$ and $\cid'$ such that $c' \in k(\devof(\con'))$ and $\Phi(c')[\sfrac{\cid'}{\cidVAR}] \permit (\iotConn, \cid')$.
Finally, the last condition of the derivation rule guarantees that $\inpols'(\con') \permit (\iotSub, \tf')$.

If $\tf''$ in $\msg$ is not $\tf'$ instead, it must be that $\tf'$ is some $\tf$ such that $\con' \in \subs(\tf)$ an the result follows by induction hypothesis.

\item[Cases \rulename{unSub}, \rulename{Disc}, and default idel move] 
Induction hypothesis suffices because it must be that $\con' \in \subs(\tf')$, and it holds by construction that $\inpols'(\con'') =  \inpols(\con'')$ for each $\con''$ for which both functions are defined.\qedhere
\end{description}
\end{proof}

We now prove correctness and completeness of the information flow graph separately.
\begin{lemma}\label{lm:optoflowif}
Given a configuration $I$, two devices $d, d' \in \Dev$ and a topic $t$,
if a feasible execution $\pi$ exists such that $d \xRightarrow{t} d' \in \Com(\pi)$, then the information flow graph of $I$ contains both
$(d, t)$ and $(t, d')$.
\end{lemma}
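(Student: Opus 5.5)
The plan is to unfold $\Com(\pi)$ and locate the single publish step that produces the triple. By \Cref{def:extraces}, $\Com(\pi)$ is built by $\odot$-composition from the one-step sets. A single triple $d \xRightarrow{t} d'$ can only arise as an element of one summand, since every concatenation has length at least two. So a simple induction on the length of $\pi$ shows that $\pi$ has a prefix $\sbot \rightarrow^{*} \sigma$ followed by a step $\sigma \xrightarrow{(\con, \mqttPub(t))} \sigma'$. In this step $\sigma = (\conns, \inpols, \subs)$, $d = \devof(\con)$, $\inpols(\con) \permit (\iotPub, t)$, and $d' \in \receivers(\sigma, t)$. Since $\sigma$ is reachable from $\sbot$, \Cref{lm:con,lm:pol,lm:sub} apply to it.

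\textbf{Arc $(d,t)$.} Because $\inpols(\con) \permit (\iotPub, t)$, the policy $\inpols(\con)$ is defined, so $\con \in \conns$ by \Cref{lm:pol}. Then \Cref{lm:con} gives $c \in k(d)$ and a client id $\cid$ such that $\inpols(\con) = \Phi(c)[\sfrac{\cid}{\cidVAR}]$ and $\Phi(c)[\sfrac{\cid}{\cidVAR}] \permit (\iotConn, \cid)$. Together with the publish permission, this is exactly the membership condition $t \in \Sout_c$ of \Cref{def:ioset}. Hence $(d,t) \in E$ by \Cref{def:info-graph}.

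\textbf{Arc $(t,d')$.} By \Cref{def:recvr}, there are a topic filter $\tf$ with $t \in \lang{\tf^{\mqtt}}$ and a connection $\con' \in \subs(\tf)$ with $\devof(\con') = d'$ and $\inpols(\con') \permit (\iotRec, t)$. \Cref{lm:sub} then gives $c' \in k(d')$ and $\cid'$ such that $\inpols(\con') = \Phi(c')[\sfrac{\cid'}{\cidVAR}]$, and this ground policy permits $(\iotConn, \cid')$ and $(\iotSub, \tf)$. Combined with the receive permission and $t \in \lang{\tf^{\mqtt}}$, this yields $t \in \Sin_{c'}$, so $(t,d') \in E$.

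\textbf{Main obstacle.} The delicate step is justifying that the certificate and client id witnessing $\inpols(\con') = \Phi(c')[\sfrac{\cid'}{\cidVAR}]$ are the same ones that authorize connection and subscription. Only then do all four permissions in $\Sin_{c'}$ hold for a single pair $(c',\cid')$. \Cref{lm:sub} as stated provides this directly, because it ties all properties to one instantiation. The remaining work is bookkeeping: making the decomposition of $\Com(\pi)$ precise, and checking that idle publish moves in the prefix do not interfere.
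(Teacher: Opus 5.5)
Your proposal is correct and follows the paper's proof essentially step by step. You isolate the one-step publish execution that contributes $d \xRightarrow{t} d'$ to $\Com(\pi)$, then obtain $t \in \Sout_c$ from \Cref{lm:pol} and \Cref{lm:con}, and $t \in \Sin_{c'}$ from \Cref{def:recvr} and \Cref{lm:sub}. Your way of locating that step, as a transition out of a state $\sigma$ with $\sigma_{\bot} \rightarrow^{*} \sigma$, is slightly more careful than the paper's induction over the clauses of \Cref{def:extraces}: the paper's inductive step applies the hypothesis to a suffix $\pi'$ that need not start from $\sigma_{\bot}$, and hence need not be feasible.
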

\begin{proof}
    Take $\dev, \dev', t$ and $\pi$, such that $d \xRightarrow{t} d' \in \Com(\pi)$.
    We prove the desideratum by induction over the conditions of~\Cref{def:extraces}.
    
    As base case, let $\pi = \sigma \xrightarrow{\msg} \sigma'$, with $\sigma = (\conns, \inpols, \subs)$, $\msg = (\con, \mqttPub(t))$, and 
    $\inpols(\con) \permit (\iotPub, t)$.
    Since $d \xRightarrow{t} d' \in \Com(\pi)$, it must be that $d = \devof(\con)$ and $\dev' \in \receivers(\sigma, t)$.
    By~\Cref{lm:pol}, $\con \in \conns$, and thus       
	it holds by~\Cref{lm:con} that 
	$\inpols(\con) = \Phi(c)[\sfrac{\cid}{\cidVAR}]$ for some $c$ and $\cid$ such that $c \in k(d)$ and $\Phi(c)[\sfrac{\cid}{\cidVAR}] \permit (\iotConn, \cid)$.
	Therefore, we know that $t \in \Sout_{c}$ by~\Cref{def:ioset}, and thus, by~\Cref{def:info-graph}, $(d, t)$ is an arc of the information flow graph.
		
	Moreover, by~\Cref{def:recvr}, $\dev' \in \receivers(\sigma, t)$ implies $t$ is in $\lang{\tf^{\mqtt}}$ for some $\tf$ such that $\con' \in \subs(\tf)$ and $\inpols(\con) \permit (\iotRec, t)$ for some $\con'$ such that $\devof(\con') = \dev'$.
	Since $\con' \in \subs(\tf)$, by~\Cref{lm:sub}, a certificate $c'$ must exists such that $\inpols(\con) = \Phi(c')[\sfrac{\cid'}{\cidVAR}]$ and $\Phi(c')[\sfrac{\cid'}{\cidVAR}] \permit (\iotSub, \tf)$ for some $\cid'$ such that $c' \in k(d')$ and $\Phi(c')[\sfrac{\cid'}{\cidVAR}] \permit (\iotConn, \cid')$.
	Therefore, we know that $t \in \Sin_{c'}$ by~\Cref{def:ioset}, and thus, by~\Cref{def:info-graph}, $(t, d')$ is an arc of the information flow graph.
	
	Finally, we consider the induction step and assume $\pi = \sigma \xrightarrow{\msg} \sigma' \odot \pi'$.
	Then, since $d \xRightarrow{t} d'$ is a single step, it is either that $d \xRightarrow{t} d' \in \Com(\sigma \xrightarrow{\msg} \sigma')$ or $d \xRightarrow{t} d' \in \Com(\pi')$, and in both cases we can directly rely on the induction hypothesis.
\end{proof}

\begin{lemma}\label{lm:optoflowfi}
Given a configuration and its information flow graph $G$, 
two devices $d, d' \in \Dev$, and a topic $t$,
if both $(d, t)$ and $(t, d')$ are arcs in $G$, then
a feasible  execution $\pi$ exists such that $d \xRightarrow{t} d' \in \Com(\pi)$.
\end{lemma}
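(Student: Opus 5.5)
The proof is a direct construction of a four-step feasible execution from the witnesses hidden in the two arcs. By \Cref{def:info-graph}, the arc $(d,t)$ gives a certificate $c \in k(d)$ with $t \in \Sout_c$. Unfolding \Cref{def:ioset}, there is then a client id $\cid$ such that $P = \Phi(c)[\sfrac{\cid}{\cidVAR}]$ satisfies $P \permit (\iotConn, \cid)$ and $P \permit (\iotPub, t)$. Symmetrically, the arc $(t,d')$ gives $c' \in k(d')$ with $t \in \Sin_{c'}$. This yields a client id $\cid'$ and a topic filter $\tf$ such that $P' = \Phi(c')[\sfrac{\cid'}{\cidVAR}]$ permits $(\iotConn,\cid')$, $(\iotSub,\tf)$ and $(\iotRec,t)$, and moreover $t \in \lang{\tf^{\mqtt}}$.

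Next I pick two distinct connections $\con$ and $\con'$ with $\devof(\con) = d$ and $\devof(\con') = d'$. This uses the implicit assumption that every device has arbitrarily many connections available; the case $d = d'$ is allowed and is exactly why the connections must be distinct. The execution is
\[
\sbot \xrightarrow{(\con, \mqttConn(c,\cid))} \sigma_1 \xrightarrow{(\con', \mqttConn(c',\cid'))} \sigma_2 \xrightarrow{(\con', \mqttSub(\tf))} \sigma_3 \xrightarrow{(\con, \mqttPub(t))} \sigma_3 .
\]
I then check that each step fires the intended rule of \Cref{fig:opSem}. The first two are instances of \rulename{Con}: their premises are $c \in k(\devof(\con))$ and the connect permission of the instantiated policy. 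After them, $\conns = \{\con,\con'\}$, $\inpols(\con) = P$ and $\inpols(\con') = P'$; distinctness guarantees the second update does not overwrite the first. The third step is \rulename{Sub}, since $\con' \in \conns$ and $P' \permit (\iotSub,\tf)$, so $\sigma_3$ has $\con' \in \subs(\tf)$. The last step is the idle move. The execution starts at $\sbot$, so it is feasible.

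Finally I compute $\Com(\pi)$ via \Cref{def:extraces}. The first three transitions contribute $\emptyset$, because they are not publish messages. For the last, $\inpols(\con) = P \permit (\iotPub,t)$, and $d' \in \receivers(\sigma_3,t)$ by \Cref{def:recvr}, since $\tf \in \topicfs\!\downarrow_t$, $\con' \in \subs(\tf)$ and $\inpols(\con') = P' \permit (\iotRec,t)$. Hence $d \xRightarrow{t} d'$ belongs to $\Com$ of the final step. Because $\emptyset \odot \Pi$ contains all elements of $\Pi$, it also belongs to $\Com(\pi)$.

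There is no real obstacle; the main care point is bookkeeping. One needs the two connections to be fresh and distinct, so that \rulename{Con} applies cleanly and the policy maps do not clash, including when $d = d'$. One also needs the unfolding of $\odot$ with empty sets to preserve the final singleton, which holds because the definition of $\Pi \odot \Pi'$ includes the elements of each side.
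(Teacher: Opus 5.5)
Your proposal is correct and matches the paper's proof. Both extract $c,\cid$ from $t \in \Sout_c$ and $c',\cid',\tf$ from $t \in \Sin_{c'}$, build the same four-step execution (two \rulename{Con} steps, a \rulename{Sub} step, then an idle publish step), and conclude via \Cref{def:extraces} and \Cref{def:recvr}. You also make explicit two points the paper leaves implicit: that $\con \neq \con'$ is needed when $d = d'$, and that $\emptyset \odot \Pi$ retains the final singleton.
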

\begin{proof}
Take $d, d'$ and $t$, and assume both $(d, t)$ and $(t, d')$ are arcs of the information flow graph.
By~\Cref{def:info-graph}, 
$c, c'$ exists such that $c \in k(d)$, $c' \in k(d')$, with  
$t \in \Sout_c$ and $t \in \Sin_{c'}$; thus we know that $\cid$, $\cid'$ and $\tf$ exists such that
we can build the following feasible execution $\pi$ in accordance with the rules of~\Cref{fig:opSem}, where $\devof(\con) = \dev$, $\devof(\con') = \dev'$, and $\Phi_{c, \cid}$ stands for $\Phi(c)[\sfrac{\cid}{\cidVARa}]$.
\begin{align*}
\sbot &\xrightarrow{(\con, \mqttConn(c, \cid))}\\
&(\{ \con \}, \{\con \mapsto \Phi_{c, \cid} \}, \varsigma_\bot)\\
&\xrightarrow{(\con', \mqttConn(c', \cid'))} \\
&(\{ \con, \con' \}, \{\con \mapsto \Phi_{c, \cid}, \con' \mapsto \Phi_{c', \cid'}\}, \varsigma_\bot)\\
&\xrightarrow{(\con', \mqttSub(\tf))} \\
&(\{ \con, \con' \}, \{\con \mapsto \Phi_{c, \cid}, \con' \mapsto \Phi_{c', \cid'}\}, \{ \tf \mapsto \{ \con' \} \})\\
&\xrightarrow{(\con, \mqttPub(t))} \\
&(\{ \con, \con' \}, \{\con \mapsto \Phi_{c, \cid}, \con' \mapsto \Phi_{c', \cid'}\}, \{ \tf \mapsto \{ \con' \} \})
\end{align*}

The proof concludes by showing that $d \xRightarrow{t} d' \in \Com(\pi)$.
In particular, it holds by~\Cref{def:extraces} that the conditions of $t \in \Sout_c$ and $t \in \Sin_{c'}$ imply $d \xRightarrow{t} d' \in \Com(\sigma \xrightarrow{(\con, \mqttPub(t))} \sigma)$, with $\sigma$ the last broker state of $\pi$.
\end{proof}

\begin{lemma}
\label{th:optoflow}
	Given a configuration, its information flow graph $G$ and two devices $d, d'$, the graph $G$ contains both the arcs $(d, t)$ and $(t,d')$ if and only if a feasible execution $\pi$ exists such that  $\dev \xRightarrow{t} \dev' \in \Com(\pi)$.
\end{lemma}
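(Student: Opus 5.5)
The statement is a biconditional, and both directions have already been proved separately, so I will simply combine them. For the \emph{only if} direction, assume $G$ contains both $(d,t)$ and $(t,d')$. \Cref{lm:optoflowfi} then yields a feasible execution $\pi$ with $d \xRightarrow{t} d' \in \Com(\pi)$. That execution is built explicitly from the witnesses $c,\cid$ and $c',\cid',\tf$ supplied by \Cref{def:ioset}: connect $d$, connect $d'$, subscribe $d'$ to $\tf$, then publish on $t$. For the \emph{if} direction, assume some feasible $\pi$ has $d \xRightarrow{t} d' \in \Com(\pi)$. \Cref{lm:optoflowif} then gives both arcs $(d,t)$ and $(t,d')$ in $G$.

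The only point needing care is that the two lemmas are stated for exactly the same objects as the present statement. Both quantify over the same configuration, devices $d,d'$ and topic $t$. Both use the same notion of feasibility from \Cref{def:execs}, namely starting from $\sbot$. Both use membership of the single triple $d \xRightarrow{t} d'$ in $\Com(\pi)$ rather than reachability via longer traces. With this checked, the proof amounts to citing the two lemmas, so I expect no real obstacle.

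If there is any subtlety, it is inherited from \Cref{lm:optoflowfi}: the constructed run must actually fire each rule of \Cref{fig:opSem} rather than the idle move. This requires the connections $\con,\con'$ to be fresh and distinct, with $\devof(\con)=d$ and $\devof(\con')=d'$. The case $d=d'$ is handled by using two different connections of the same device, which the model allows. The publish step must also satisfy $\inpols(\con)\permit(\iotPub,t)$, and $d'$ must lie in $\receivers(\sigma,t)$. Both conditions follow from $t\in\Sout_c$ and $t\in\Sin_{c'}$ together with \Cref{lm:setpred}-style unfolding of \Cref{def:perm}. I take that lemma as given, so the present proof is the two-line combination.
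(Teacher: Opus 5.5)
Your proof is correct and matches the paper's, which likewise obtains the lemma directly from \Cref{lm:optoflowfi} (arcs $\Rightarrow$ execution) and \Cref{lm:optoflowif} (execution $\Rightarrow$ arcs), with the directions assigned as you state. One minor aside: the publish-step side conditions follow by unfolding \Cref{def:ioset}, which is already phrased in terms of $\permit$, so \Cref{lm:setpred} is not needed there.
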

\begin{proof}
    By~\Cref{lm:optoflowif} and~\ref{lm:optoflowfi}.
\end{proof}

We address now reachability over the information flow graph, and its correspondence to multi-steps communication traces.
A useful property of feasible executions is that they can be composed, thanks to $\mqttDis$ which allows extending each feasible execution to one from $\sbot$ to $\sbot$.
\begin{lemma}\label{lm:compose}
Let $\pi, \pi'$ be feasible executions of a configuration, then a feasible execution $\pi''$ exists such that $\Com(\pi'') = \Com(\pi) \odot \Com(\pi')$.
\end{lemma}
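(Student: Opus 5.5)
Starting from the last state of $\pi$, I would add a ``cleanup'' execution that brings the broker back to $\sbot$, and then prepend this combined run to $\pi'$. Let $\pi$ end in $\sigma_n = (\conns, \inpols, \subs)$. For each $\con \in \conns$, listed in any order $\con_1, \dots, \con_m$, issue the message $(\con_i, \mqttDis)$. Call the resulting execution $\pi_{\mathit{dis}}$. By rule \rulename{Disc}, each step removes $\con_i$ from the set of connections, restricts $\inpols$ to the remaining connections, and intersects every $\subs(\tf)$ with them. After $m$ steps we reach $(\emptyset, \emptyset', \subs'')$, where $\inpols$ has become the empty function (since $\inpols$ is restricted to $\conns' = \emptyset$) and $\subs''(\tf) = \subs(\tf) \cap \emptyset = \emptyset$ for every $\tf$. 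This final state is exactly $\sbot$. We also use \Cref{lm:pol}, together with the fact that \rulename{Disc} restricts $\inpols$ to $\conns'$, so no stale policy bindings survive. If $\conns = \emptyset$ already, the state must still equal $\sbot$: $\inpols$ is empty by \Cref{lm:pol}, and subscriptions are empty by \Cref{lm:sub} combined with \Cref{lm:con}, because subscribers are connected. In that case a single \rulename{Disc} on any fresh connection, or the idle move, suffices.

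Next I define $\pi'' = \pi \odot \pi_{\mathit{dis}} \odot \pi'$. This concatenation is well defined, because $\pi_{\mathit{dis}}$ ends in $\sbot$, which is where the feasible $\pi'$ starts. The run $\pi''$ is also feasible, because it starts at $\sbot$.

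It remains to compute $\Com(\pi'')$. Unfolding \Cref{def:extraces} step by step and using associativity of the lifted $\odot$ on sets of traces gives $\Com(\pi'') = \Com(\pi) \odot \Com(\pi_{\mathit{dis}}) \odot \Com(\pi')$. I would make this a small auxiliary fact, $\Com(\pi_1 \odot \pi_2) = \Com(\pi_1) \odot \Com(\pi_2)$, proved by induction on the length of $\pi_1$. Every step of $\pi_{\mathit{dis}}$ is a disconnect, not a publish, so each single-step $\Com$ is $\emptyset$. Since $\emptyset$ is a unit for the lifted $\odot$ (by definition $\Pi \odot \emptyset = \Pi$), we obtain $\Com(\pi'') = \Com(\pi) \odot \Com(\pi')$.

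The main obstacle is the auxiliary fact and the algebra of the lifted $\odot$. Its definition includes the elements of both sets together with all defined concatenations, so associativity has to be checked carefully. I would first show that $(\Pi_1 \odot \Pi_2) \odot \Pi_3$ and $\Pi_1 \odot (\Pi_2 \odot \Pi_3)$ both equal the set of all defined concatenations of traces drawn from a contiguous subsequence of $\Pi_1, \Pi_2, \Pi_3$, taking at most one trace from each. With that in hand, the rest is routine unfolding.
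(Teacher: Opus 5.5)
Your proposal is correct and follows the paper's proof. You append a $\mqttDis$ for each active connection of the last state of $\pi$ to return to $\sigma_{\bot}$, continue with $\pi'$, and note that the disconnect block contributes $\emptyset$, a two-sided unit for the lifted $\odot$; your auxiliary fact $\Com(\pi_1\odot\pi_2)=\Com(\pi_1)\odot\Com(\pi_2)$ via associativity is rigour the paper leaves implicit. One slip there: both bracketings of $\Pi_1\odot\Pi_2\odot\Pi_3$ equal the defined concatenations of at most one trace from each $\Pi_i$, taken in order but \emph{not} necessarily from contiguous $\Pi_i$ (since $\Pi_1\subseteq\Pi_1\odot\Pi_2$, the set $(\Pi_1\odot\Pi_2)\odot\Pi_3$ contains every defined $\varpi_1\odot\varpi_3$), and exactly this lets traces of $\pi$ link with traces of $\pi'$ across the empty disconnect block.
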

\begin{proof}
Let $\pi$ and $\pi'$ be as follows:
\begin{align*}
    \pi &= \sbot \xrightarrow{\msg_1} \sigma_1 \xrightarrow{\msg_2} \dots \xrightarrow{\msg_n} \sigma_n\\
    \pi' &= \sbot \xrightarrow{\msg_1'} \sigma_1' \xrightarrow{\msg_2'} \dots \xrightarrow{\msg_m'} \sigma_m'
\end{align*}
Let also $\sigma_n = (\conns, \inpols, \subs)$, with $\conns = \{ \con_i \}_{i = 1}^k$ and let $\msg''_i = (\con_i, \mqttDis)$ for all $i$ from $1$ to $k$.
Then it holds by construction that the following $\pi''$ is a feasible execution:
\begin{align*}
    \pi'' &= \sbot \xrightarrow{\msg_1} \sigma_1 \xrightarrow{\msg_2} \dots \xrightarrow{\msg_n} \\
    &\sigma_n \xrightarrow{\msg_1''} \sigma_1'' \xrightarrow{\msg_2''} \sigma_2'' \dots \xrightarrow{\msg_k''} \\
    &\sbot \xrightarrow{\msg_1'} \sigma_1' \xrightarrow{\msg_2'} \dots \xrightarrow{\msg_m'} \sigma_m'.
\end{align*}
Moreover, by~\Cref{def:extraces}, $\Com(\sigma_n \xrightarrow{\msg_1''} \dots \xrightarrow{\msg_k''}) = \emptyset$, and also $\Com(\pi'') = \Com(\pi) \odot \emptyset \odot \Com(\pi') = \Com(\pi) \odot \Com(\pi')$. 
\end{proof}

Moreover, if a trace of a feasible execution can be decomposed (i.e., represented as the composition of two traces), then feasible executions exists also for the components.
\begin{lemma}\label{lm:decompose}
	Let $\pi$ and $\varpi$ be a feasible execution of a configuration and a trace such that $\varpi' \odot \varpi'' = \varpi \in \Com(\pi)$ for some $\varpi'$ and $\varpi''$, then feasible executions $\pi', \pi''$ exist such that $\varpi' \in \Com(\pi')$ and $\varpi'' \in  \Com(\pi'')$.
\end{lemma}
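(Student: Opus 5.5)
We must show: if $\varpi'\odot\varpi''=\varpi\in\Com(\pi)$ with $\pi$ feasible, then feasible $\pi',\pi''$ exist with $\varpi'\in\Com(\pi')$ and $\varpi''\in\Com(\pi'')$. The plan is \emph{not} to slice $\pi$ directly, since a prefix of $\pi$ need not produce exactly the triples of $\varpi'$. Instead we reduce to single communication triples, using \Cref{lm:optoflowif}, \Cref{lm:optoflowfi} and \Cref{lm:compose}.

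First, we prove an auxiliary claim by induction on the structure of \Cref{def:extraces}: every trace in $\Com(\pi)$ is a concatenation of single triples $d_{i-1}\xRightarrow{t_i}d_i$, each of which lies in $\Com(\pi)$. The base case (a single publish step) yields only single triples. In the inductive case $\Com(\sigma\xrightarrow{\msg}\sigma')\odot\Com(\pi')$, an element is either from one of the two sets, where the induction hypothesis applies, or a concatenation of one from each. For the concatenation, each constituent triple belongs to the corresponding component. We then need that a triple in $\Com$ of a suffix also lies in $\Com$ of the whole execution. This holds because the $\odot$-union keeps all elements of both sets.

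Next, fix a component, say $\varpi'=d_0\xRightarrow{t_1}\cdots\xRightarrow{t_k}d_k$. For each $i$, the triple $d_{i-1}\xRightarrow{t_i}d_i$ is in $\Com(\pi)$. A subtlety is that $\Com$ is defined on subexecutions starting from non-initial states, but the triple arises at some step of the feasible $\pi$. So \Cref{lm:optoflowif} gives arcs $(d_{i-1},t_i)$ and $(t_i,d_i)$ in $G$, and \Cref{lm:optoflowfi} gives a feasible execution $\pi_i$ with $d_{i-1}\xRightarrow{t_i}d_i\in\Com(\pi_i)$. Applying \Cref{lm:compose} repeatedly yields a feasible $\pi'$ with $\Com(\pi')=\Com(\pi_1)\odot\cdots\odot\Com(\pi_k)$, which contains the concatenation $\varpi'$. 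The same construction gives $\pi''$ for $\varpi''$.

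The main obstacle is the first step. We must carefully justify that every triple occurring inside a trace of $\Com(\pi)$ is itself obtained from a publish step of $\pi$ that starts in a reachable state. The lemmas \Cref{lm:con}, \Cref{lm:pol} and \Cref{lm:sub} are only stated for states reachable from $\sbot$, and \Cref{lm:optoflowif} relies on them. Therefore, I would state the auxiliary claim as: for $\pi=\sbot\rightarrow^*\sigma\odot\pi_0$, every triple of a trace in $\Com(\pi_0)$ comes from a publish transition in $\pi_0$. I would prove this by the same induction, so that the reachability hypothesis is preserved along the recursion.
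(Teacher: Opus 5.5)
Your proof is correct, but it takes a different route from the paper's. The paper does slice $\pi$. The triples of a trace in $\Com(\pi)$ come from publish steps at strictly increasing positions of $\pi$. So cutting $\pi=\pi_0'\odot\pi_0''$ right after the step that contributes the last triple of $\varpi'$ gives $\varpi'\in\Com(\pi_0')$ and $\varpi''\in\Com(\pi_0'')$. Only membership is needed, not equality, so your stated reason for avoiding slicing is not the real obstacle. The prefix is already feasible. The only problem is that $\pi_0''$ does not start in $\sigma_{\bot}$, and the paper fixes this by prepending $\pi_0'$ with all publish messages deleted. That stripped prefix is still a valid execution from $\sigma_{\bot}$ (publishes are idle moves), ends in the same state, and contributes no traces.

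Your route instead atomises the trace into single triples, sends each one to arcs of $G$ via \Cref{lm:optoflowif}, rebuilds one feasible execution per triple via \Cref{lm:optoflowfi}, and glues them with \Cref{lm:compose}. Every step is sound, and there is no circularity, since those lemmas do not use this one. \Cref{lm:optoflowif} is stated for a feasible $\pi$ and a triple in $\Com(\pi)$, so your atomisation claim lets you cite it directly. The reachability-preserving reformulation in your last paragraph is only needed to repair the paper's own proof of \Cref{lm:optoflowif}, whose inductive step does apply the hypothesis to a suffix not starting in $\sigma_{\bot}$.

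The trade-off is this. The paper's argument is purely operational and independent of policies and of $G$. Yours makes the lemma rest on the whole graph correspondence (\Cref{lm:con}, \Cref{lm:pol}, \Cref{lm:sub}), including implicit assumptions such as the freshness of the connection in \rulename{Con} used in the proof of \Cref{lm:sub}. Moreover, your atomisation claim plus \Cref{lm:optoflowif} already yields the direction of \Cref{th:optoflowf} that this lemma is used for. Your argument therefore effectively bypasses the lemma rather than giving it an independent proof.
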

\begin{proof}
Let $\pi\setminus \mqttPub$ be the execution $\pi$ where each occurrence of $\mqttPub$ messages are removed, and notice the following trivial properties: $\pi\setminus \mqttPub$ is feasible if $\pi$ is feasible; $\Com(\pi\setminus \mqttPub) = \emptyset$; the final states of $\pi$ and $\pi\setminus \mqttPub$ coincide.

By~\Cref{def:extraces}, $\varpi' \odot \varpi''$ implies that $\pi_0'$ and $\pi_0''$ exists such that
$\pi = \pi_0' \odot \pi_0''$, with $\varpi' \in \Com(\pi_0')$ and $\varpi'' \in  \Com(\pi_0'')$.
Note that $\pi_0'$ is feasible, hence we can take $\pi' = \pi_0'$.
For $\pi''$, we take instead $\pi'' = (\pi_0'\setminus \mqttPub) \odot \pi_0''$.
The result follows by noticing that $\pi''$ is well defined because the final state of $(\pi_0'\setminus \mqttPub)$ coincides with the one of $\pi_0'$, that $\pi''$ is feasible because $\pi_0'$ is feasible, and that $\Com((\pi_0'\setminus \mqttPub) \odot \pi_0'') = \emptyset \odot \Com(\pi_0'') = \Com(\pi_0'')$.
\end{proof}

\begin{table*}[th]
    \caption{\thetool\ Scalability on real-world configurations: comparison between cvc5 and Z3 as backend SMT solver. }\label{tab:rw:z3}
    \centering
    \footnotesize
    \setlength{\tabcolsep}{7pt}
    \begin{tabular}{c l | c c c c | c c | c c c}
		\toprule
		& &  \textbf{nodes} & \textbf{hard strategies}  & \multicolumn{6}{c}{\textbf{building time with ($s$)}} & \textbf{query time ($s$)} \\
        & & \textbf{avg.}   &  \textbf{avg.}   & \multicolumn{2}{c}{\textbf{min.}} & \multicolumn{2}{c}{\textbf{avg.}}  & \multicolumn{2}{c}{\textbf{max.}} & \textbf{avg.} \\
        & & &  & \textbf{cvc5} & \textbf{Z3} & \textbf{cvc5} & \textbf{Z3} & \textbf{cvc5} & \textbf{Z3} & \\
		\midrule
   		\multirow{ 13}{*}{\rotatebox{90}{\footnotesize \textbf{certificates}}} 
        &    20 	&   264.6    &  1.8  &  0.42   &   1.22  &  2.87  &   4.94  & 12.00  &  10.69  & 0.0060 \\
        &    40 	&   973.0	 &  4.5  &  2.30   &   6.17  &  6.28  &  12.64  & 16.55  &  20.28  & 0.0086 \\
        &    60 	&  2211.2	 &  5.6	 &  2.93   &   7.04  &  9.22  &  18.22  & 23.39  &  30.00  & 0.0108 \\
        &    80 	&  3903.9	 & 13.6	 &  5.93   &  18.67  & 16.34  &  30.72  & 29.08  &  47.88  & 0.0136	\\
        &    100 &  6098.1	 & 16.7	 &  7.50   &  18.89  & 19.46  &  39.82  & 36.55  &  58.76  & 0.0167	\\
        &    120 &  8764.2	 & 21.5	 & 13.58   &  35.17  & 24.69  &  52.35  & 41.49  &  87.98  & 0.0198	\\
        &    140 & 12080.1	 & 28.7	 & 18.61   &  42.75  & 31.06  &  64.81  & 46.63  & 100.06  & 0.0244	\\
        &    160 & 15415.9	 & 29.6	 & 20.03   &  51.89  & 33.01  &  72.11  & 44.43  &  90.16  & 0.0280	\\
        &    180 & 19571.7	 & 37.0	 & 26.61   &  69.91  & 37.87  &  86.89  & 52.79  & 109.81  & 0.0318	\\
        &    200 & 24209.9    & 45.8	 & 31.62   &  78.76  & 46.06  & 103.63  & 57.34  & 125.65  & 0.0349	\\
        &    220 & 29066.2    & 50.5	 & 32.39   &  95.42  & 50.93  & 117.21  & 62.02  & 129.81  & 0.0393	\\
        &    240 & 34732.6    & 59.0	 & 42.11   & 114.48  & 57.68  & 130.17  & 62.41  & 141.43  & 0.0430	\\
        &    258 & 40019.0    & 65.0	 & 62.35   & 141.24  & 62.67  & 144.20  & 63.53  & 152.19  & 0.0473	\\

		\bottomrule
    \end{tabular}
\end{table*}

\optoflowf*
\begin{proof}
Assume $G$ has an information flow from $d$ to $d'$, i.e. there is are topics $t_1, t_2, \dots, t_n$ and devices $d_1, d_2, \dots, d_n, d_{n+1}$ such that $(d_i, t_i)$ and $(t_i, d_{i+1})$ are arcs of $G$, with $d_1 = d$ and $d_{n+1} = d'$.
Then, by~\Cref{th:optoflow}, for each $i = 1, \dots, n$, a feasible executions $\pi_i$ exist such that $d_i \xRightarrow{t_i} d_{i+1} \in \Com(\pi_i)$.
Finally, by applying~\Cref{lm:compose} $n$ times, a feasible $\pi$ exists such that $\Com(\pi) = \Com(\pi_1) \odot \Com(\pi_2) \dots \odot \Com(\pi_i)$ contains the trace 
$d_1 \xRightarrow{t_1} d_2 \xRightarrow{t_2} \dots d_n \xRightarrow{t_n} d_{n+1}$, which is from $d$ to $d'$ by construction.

Assume instead that a feasible execution $\pi$ exists such that $\Com(\pi)$ contains a trace from $\dev$ to $\dev'$.
We prove that $G$ has an information flow from $d$ to $d'$ by induction on the length of the trace $\varpi$ from $d$ to $d'$.
If $\varpi$ is a single step, then the information flow exists in $G$ by~\Cref{th:optoflow}.
Otherwise, assume that $\varpi = \varpi' \odot \varpi''$, then by~\Cref{lm:decompose} two feasible executions $\pi'$ and $\pi''$ exist with  $\varpi' \in \Com(\pi')$ and $\varpi'' \in  \Com(\pi'')$.
By induction hypothesis, $G$ contains both information flows for $\varpi'$ (starting with $d$) and $\varpi''$ (ending in $d'$), and by construction the two can compose (the final device of $\varpi'$ must be the initial one of $\varpi''$).
\end{proof}

\subsection{Complexity of our Approach}

We prove the worst case complexity of our algorithm for building the symbolic information flow graph.
In spite of the theoretical result, our heuristics make the approach efficient when dealing with real-world configurations, as shown by our experimental evaluation.

\complexity*
\begin{proof}
\Cref{al:sifg} iterates over each pair of certificates $(c, c')$ and for each pairs it builds the predicate $F_{c,c'}$ and checks its satisfiability.
The overall cost is thus $O(|\certs|^2\cdot (\mathit{B} + \mathit{S} + \mathit{U}))$, where $\mathit{B}$, $\mathit{S}$ and $\mathit{U}$ are the cost of building $F_{c,c'}$, of checking its satisfiability and of updating the graph,
respectively.

To estimate the value of $B$, consider a request $r$ with certificate $c$. 
From~\Cref{def:eval} we have that the formula $\reqEval{r}$ is of length at most $O(\sum_{s = (\effect, \action, \RE) \in \Phi(c)} |\RE|)$.
We can approximate the expression above as $O(\overline{\pol}\cdot \overline{\RE})$ by considering the worst case, i.e. the policy containing the greatest number of statements $\overline{\pol}$ and the statement with the greatest number of resource expressions $\overline{\RE}$.
Although each $F_{c,c'}$ is the combination of five expressions built though $\reqEval{\_}$, its length is still at most $O(\overline{\pol}\cdot \overline{\RE})$ (see~\Cref{def:iopred} and ~\Cref{def:info-graph:sym}).
The cost of building $F_{c,c'}$ is thus $\mathit{B} = O(\overline{\pol}\cdot \overline{\RE})$.

To estimate the value of $S$, consider that checking the satisfiability of $F_{c,c'}$ through \Cref{al:witness} depends on the decision procedure used by the solver, which is invoked a fixed number of times (ten times in the worst case).
Overall, the worst case for $S$ is $O(f(\overline{\pol}\cdot \overline{\RE}))$, where $f$ is the cost of the algorithm used by the SMT solver for dealing with regular expressions and string concatenation.

To estimate the value of $U$, consider the worst case: every certificate is associated with all the devices.
Then we iterate on every device twice, one for each internal loop of~\Cref{al:sifg}.
Therefore, $\mathit{U}$ is bounded by $O(|\Dev|)$.
\end{proof}

\section{Comparing different SMT solvers}\label{app:z3}
\Cref{tab:rw:z3} compares the scalability of \thetool\ using cvc5 and Z3 as the SMT solver on the experiments described in \Cref{sec:tool}.
The experiments of \Cref{tab:rw:z3} were carried out by generating a set of random devices from which we derive the configuration on which we tested the two versions of the tool. 
The solutions found by using the two solvers coincide as expected, so do the average number of nodes and the times the hard strategy is required. 
Our results show that cvc5 performs better than Z3 almost all the time.

\end{document}